\renewcommand{\natural}{{\mathbb{N}}}
\newcommand{\integernonnegative}{{\mathbb{Z}_{\geq0}}}
\renewcommand{\natural}{{\mathbb{Z}_{>0}}}
\newcommand{\naturalzero}{\mathbb{N}_0}
\renewcommand{\naturalzero}{\integernonnegative}
\newcommand{\real}{{\mathbb{R}}}
\newcommand{\subscr}[2]{{#1}_{\textup{#2}}}
\newcommand{\union}{\cup}
\newcommand{\intersection}{\ensuremath{\operatorname{\cap}}}
\newcommand{\map}[3]{#1: #2 \rightarrow #3}
\newcommand{\setdef}[2]{\{#1 \; | \; #2\}}
\newcommand{\norm}[1]{\|#1\|}
\newcommand{\diam}{\operatorname{diam}}
\newcommand{\radius}{\operatorname{radius}}
\newcommand{\alphabet}{\mathbb{A}}
\newcommand{\algofont}[1]{{\small\textsc{#1}}}
\newcommand{\e}{e}
\newcommand{\Enorm}[1]{\|#1\|_{2}}
\newcommand{\umax}{\subscr{r}{ctr}}
\renewcommand{\umax}{\subscr{u}{max}}
\newcommand{\rcmm}{\subscr{r}{cmm}}
\newcommand{\vmax}{\subscr{v}{max}}
\newcommand{\xmin}{\subscr{x}{min}}
\newcommand{\xmax}{\subscr{x}{max}}
\newcommand{\ymin}{\subscr{y}{min}}
\newcommand{\ymax}{\subscr{y}{max}}
\newcommand{\Hoptimal}[1]{H^{[#1]}_{\textup{optimal}}}
\newcommand{\Hmeasured}[1]{H^{[#1]}_{\textup{measured}}}
\renewcommand{\Hmeasured}[1]{H^{[#1]}_{\textup{sensed}}}
\newcommand{\GG}{\mathcal{G}}
\newcommand{\cball}[2]{B(#2,#1)}
\newcommand{\Bigcball}[2]{B\Big(#2,#1\Big)}
\newcommand{\bigcball}[2]{B\big(#2,#1\big)}
\newcommand{\card}{\ensuremath{\operatorname{card}}}
\newcommand{\until}[1]{\{1,\dots,#1\}}
\newcommand{\fromto}[2]{\{#1,\dots,#2\}}
\newcommand{\HH}{\mathcal{H}}
\newcommand{\XX}{\mathcal{X}}
\newcommand{\zeroRd}{{0}_{\real^d}}
\newcommand{\oneRd}{{1}_{\real^d}}
\renewcommand{\zeroRd}{{0}_d}
\renewcommand{\oneRd}{{1}_d}
\newcommand{\subject}{\text{subject to}}
\newcommand{\half}{\frac{1}{2}}
\newcommand{\ViolTest}[2]{\textup{\texttt{Viol}}(#1, #2)}
\newcommand{\Basis}[2]{\textup{\texttt{Basis}}(#1, #2)}
\newcommand{\subexLP}{\textup{\texttt{SUBEX\_lp}}}
\renewcommand{\subexLP}{\textup{\texttt{Subex\_LP}}}
\newcommand{\NALPmap}{\mathcal{B}}
\newcommand{\outnbrs}{\subscr{\mathcal{N}}{out}}
\newcommand{\innbrs}{\subscr{\mathcal{N}}{in}}
\newcommand{\dist}{\operatorname{dist}}
\newcommand{\fti}{\text{fti}}
\newcommand{\supind}[2]{{#1}^{[#2]}}
\newcommand{\subind}[2]{{#1}_{#2}}
\newcommand{\nll}{\textup{\texttt{null}}\xspace}
\newcommand{\ctrl}{\textup{ctl}}
\newcommand{\msg}{\textup{msg}}
\newcommand{\stf}{\textup{stf}}
\newtheorem{theorem}{Theorem}[section]
\newtheorem{proposition}[theorem]{Proposition}
\newtheorem{definition}[theorem]{Definition}
\newtheorem{lemma}[theorem]{Lemma}
{\theorembodyfont{\rmfamily} 
\newtheorem{conjecture}[theorem]{Conjecture}
\newtheorem{remark}[theorem]{Remark}

\newtheorem{example}[theorem]{Example}

\newtheorem{problem}[theorem]{Problem}}
\newcommand\oprocendsymbol{\hbox{$\square$}}
\newcommand\oprocend{\relax\ifmmode\else\unskip\hfill\fi\oprocendsymbol}
\newcommand{\PiLP}{\Pi_{\textup{LP}}}
\begin{document}

\title{Distributed Abstract Optimization via Constraints Consensus: Theory
  and Applications\thanks{This material is based upon work supported in
    part by ARO MURI Award W911NF-05-1-0219, ONR Award N00014-07-1-0721,
    and NSF Award CNS-0834446. The research leading to these results has
    received funding from the European Community's Seventh Framework
    Programme (FP7/2007-2013) under grant agreement no. 224428 (CHAT
    Project). The authors would like to thank Dr.\ Colin Jones for helpful
    comments.
Early short versions of this work appeared
as~\cite{GN-FB:06d,GN-FB:06z,GN-FB:08x}: differences between these early
short versions and the current article include a much improved
comprehensive treatment, revised complete proofs for all statements, and
the Monte Carlo analysis.
}}

\author{Giuseppe Notarstefano\thanks{Giuseppe Notarstefano is with the
    Department of Engineering, University of Lecce, Via per Monteroni,
    73100 Lecce, Italy, \texttt{giuseppe.notarstefano@unile.it}}
  \quad\and\quad
  Francesco Bullo\thanks{Francesco Bullo is with the Center for Control,
    Dynamical Systems and Computation, University of California at Santa
    Barbara, CA 93106, USA, \texttt{bullo@engineering.ucsb.edu}} }

\maketitle

\begin{abstract}
  Distributed abstract programs are a novel class of distributed
  optimization problems where (i) the number of variables is much smaller
  than the number of constraints and (ii) each constraint is associated to
  a network node. Abstract optimization programs are a generalization of
  linear programs that captures numerous geometric optimization problems.
  We propose novel constraints consensus algorithms for distributed
  abstract programs: as each node iteratively identifies locally active
  constraints and exchanges them with its neighbors, the network computes
  the active constraints determining the global optimum.
  The proposed algorithms are appropriate for networks with weak
  time-dependent connectivity requirements and tight memory constraints.
  We show how suitable target localization and formation control problems
  can be tackled via constraints consensus.
\end{abstract}

\begin{IEEEkeywords}
  Distributed optimization, linear programming, consensus algorithms,
  target localization, formation control.
\end{IEEEkeywords}

\section{Introduction}
This paper focuses on a class of distributed optimization problems and its
application to target localization and formation control.
Distributed optimization and computation have recently received widespread
attention in the context of distributed estimation in sensor networks,
distributed control of actuator networks and consensus algorithms. An early
established reference on distributed optimization is \cite{JNT-DPB-MA:86},
whereas a non-exhaustive set of recent references includes
\cite{NM-AJ:09,AN-AO-PAP:09,BJ-AS-MK-KHJ:08,DPP-MC:07}.
We consider a distributed version of abstract optimization problems.
Abstract optimization problems, sometimes referred to as abstract linear
programs or as LP-type programs, generalize linear programming and model a
variety of machine learning and geometric optimization problems. Examples
of geometric optimization problems include the smallest enclosing ball, the
smallest enclosing stripe and the smallest enclosing annulus problems.
Early references to abstract optimization problems
include~\cite{JM-MS-EW:96,BG:95,MG:95}.
In this paper we are interested in abstract optimization problems where the
number of constraints $n$ is much greater than the number of constraints
$\delta$ that identify the optimum solution (and where, therefore, there is
a large number of redundant constraints). For example, we are interested in
linear programs where $n$ is much greater than the number of variables $d$
(in linear programs, $\delta=d$). Under this dimensionality assumption,
we consider distributed versions of abstract optimization programs, where
$n$ is also the number of network nodes and where each constraint is
associated to a node. We consider processor networks described by
arbitrary, possibly time-dependent communication topologies and by
computing nodes with tight memory constraints. After presenting and
analyzing constraints consensus algorithms for distributed abstract
optimization, we apply them to target localization in sensor networks and
to formation control in robotic networks.

The relevant literature is vast; we organize it in three broad
areas. First, linear programming and its generalizations, including
abstract optimization, have received widespread attention in the
literature. For linear programs in a fixed number of variables subject to
$n$ linear inequalities, the earliest algorithm with time complexity in
$O(n)$ is given in~\cite{NM:84}.  An efficient randomized algorithm is
proposed in~\cite{JM-MS-EW:96}, where a linear program in $d$ variables
subject to $n$ linear inequalities is solved in expected time $O(d^2 n +
\e^{O(\sqrt{d\log d})})$; the expectation is taken over the internal
randomizations executed by the algorithm.  An elegant survey on randomized
methods in linear programming and on abstract optimization
is~\cite{BG-EW:96}; see also~\cite{MG:95,BG-EW:01}.  The
survey~\cite{PKA-SS:01}, see also \cite{PKA-MS:98}, discusses the
application of abstract optimization to a number of geometric optimization
problems.  Regarding parallel computation approaches to linear programming,
linear programs with $n$ linear inequalities can be solved~\cite{MA-NM:96}
by $n$ parallel processors in time $O((\log\log(n))^d)$.  However, the
approach in~\cite{MA-NM:96}, see also references therein, is limited to
parallel random-access machines, where a shared memory is readable and
writable to all processors.  Other references on distributed linear
programming include~\cite{YB-JWB-DR:04,HD-HK:08}.

A second relevant literature area is distributed training of support vector
machines (SVMs).  A randomized parallel algorithm for SVM training is
proposed in~\cite{YL-VR:06} by using methods from abstract optimization and
by exploiting the idea of exchanging only active constraints. Along these
lines, \cite{YL-VR-LV:08} extends the algorithm to parallel computing over
strongly connected networks, \cite{JB-YD-JT-OW:08} contains a comprehensive
discussion of SVM training via abstract optimization, and
\cite{KF-BBL-PT:06} applies similar algorithmic ideas to wireless sensor
networks.  The algorithms in~\cite{YL-VR:06,YL-VR-LV:08}, independently
developed at the same time of our
works~\cite{GN-FB:06d,GN-FB:06z,GN-FB:08x}, differ from our constraint
consensus algorithm in the following ways.  First, the number of
constraints stored at the nodes grows at each iteration so that both the
memory and the local computation time at each node may be of order $n$.
Second, our algorithm is proposed for general abstract optimization
problems and thus may be applied to a variety of application
domains. Third, our algorithm exploits a novel re-examination idea, is
shown to be correct for time-varying (jointly strongly connected) digraphs,
and features a distributed halting condition.

As third and final set of relevant references, we include a brief synopsis
of recent progress in target localization in sensor networks and formation
control in robotic networks.  The problem of target localization has been
widely investigated and recent interest has focused on sensors and wireless
networks; e.g., see the recent text~\cite{FZ-LG:04}. In this paper we take
a deterministic worst-case approach to localization, adopting the set
membership estimation technique proposed in~\cite{AG-AV:01}. A related
sensor selection problem for target tracking is studied in \cite{VI-RB:06}.
Regarding the literature on formation control for robotic networks, an
early reference on distributed algorithms and geometric patterns is
\cite{IS-MY:99}.  Regarding the rendezvous problem, that is, the problem of
gathering the robots at a common location, an early reference
is~\cite{HA-YO-IS-MY:99}.  The ``circle formation control'' problem, i.e.,
the problem of steering the robots to a circle formation, is discussed
in~\cite{XD-AK:02}.  The
references~\cite{ME-XH:01b,HGT-GJP-VK:04,JAM-MEB-BAF:04c} are based on,
respectively, control-Lyapunov functions, input-to-state stability and
cyclic pursuit.

The contributions of this paper are twofold.  First, we identify and study
distributed abstract programming as a novel class of distributed
optimization problems that are tractable and widely applicable.  We propose
a novel algorithmic methodology, termed \emph{constraints consensus}, to
solve these problems in networks with various connectivity and memory
constraints: as each node iteratively identifies locally active constraints
and exchanges them with its neighbors, the globally active constraints
determining the global optimum are collectively identified. A constraint
re-examination idea is the distinctive detail of our algorithmic design.
We propose three algorithms, a nominal one and two variations, to solve
abstract programs depending on topology, memory and computation
capabilities of the processor network. 
We formally establish various algorithm properties, including monotonicity,
finite-time convergence to consensus, and convergence to the
possibly-unique correct solution of the abstract program.  Moreover, we
provide a distributed halting conditions for the nominal algorithm.
We provide a conservative upper bound on the completion time of the nominal
algorithm and conjecture that the completion time depends linearly on $n$
(i.e., the number of constraints and the network dimension).  Next, we
evaluate the algorithm performance via a Monte Carlo probability-estimation
analysis and we substantiate our conjecture on stochastically-generated
sample problems. Sample problems are randomly generated by considering two
classes of linear programs, taken from~\cite{RS:87}, and three types of
graphs (line-graph, Erd\H{o}s-R\`enyi random graph and random geometric
graph).

As a second set of contributions, we illustrate how distributed abstract
programs are relevant for distributed target localization in sensor
networks and for formation control problems, such as the rendezvous problem
and the line or circle formation problems. Specifically, for the target
localization problem, we design a distributed algorithm to estimate a
convex polytope, specifically an axis-aligned bounding box, containing the
moving target.  Our proposed \emph{eight half-planes consensus algorithm}
combines (i) distributed linear programs to estimate the convex polytope at
a given instant and (ii) a set-membership recursion, consisting of
prediction and update steps, to dynamically track the region. We discuss
correctness and memory complexity of the distributed estimation algorithm.
Next, regarding formation control problems, we design a joint communication
and motion coordination scheme for a robotic networks model involving
range-based communication.  We consider formations characterized by the
geometric shapes of a point, a line, or a circle.  We solve these formation
control problems in a time-efficient distributed manner combining two
algorithmic ideas: (i) the robots implement a constraints consensus
algorithm to compute a common shape reachable in minimum-time, and (ii) the
network connectivity is maintained by means of an appropriate standard
connectivity-maintenance strategy. 
In the limit of vanishing robot displacement per communication round, our
proposed \emph{move-to-consensus-shape} strategy solves the optimal
formation control tasks.

\subsubsection*{Paper organization}
The paper is organized as follows.  Section~\ref{sec:LP-type problems}
introduces abstract optimization problems.
Section~\ref{sec:network-modeling} introduces network models.
Section~\ref{sec:network-ALP} contains the definition of distributed
abstract program and the constraints consensus
algorithms. Section~\ref{sec:computations} contains the Monte Carlo
analysis of the time-complexity of the constraints consensus
algorithm. Sections~\ref{sec:target-localization}
and~\ref{sec:mintime-formation} contain the application of the proposed
constraints consensus algorithms to target localization and formation
control.

\subsubsection*{Notation}
We let $\natural$, $\naturalzero$, and $\real_{> 0}$ denote the natural
numbers, the non-negative integer numbers, and the positive real numbers,
respectively. For $r\in\real_{> 0}$ and $p\in\real^d$, we let
$\cball{r}{p}$ denote the closed ball centered at $p$ with radius $r$, that
is, $\cball{r}{p}=\setdef{q\in\real^d}{\Enorm{p-q}\leq{r}}$.  For
$d\in\natural$, we $\zeroRd$ and $\oneRd$ denote the vectors in $\real^d$
whose entries are all $0$ and $1$, respectively. Similarly, we let
$\infty_d$ and $-\infty_d$ the vectors with $d$ entries $\infty$ and
$-\infty$, respectively.  For a finite set $A$, we let $\card(A)$ denote
its cardinality.  For two functions $f,g:\natural\to\real_{> 0}$, we write
$f(n) \in O(g)$ (respectively, $f(n)\in\Omega(g)$) if there exist
$N\in\natural$ and $c\in\real_{>0}$ such that $f(n) \leq cg(n)$ for all $n
\geq N$ (respectively, $f(n) \geq cg(n)$ for all $n \geq N$).  Finally, we
introduce the convention that sets are allowed to contain multiple copies
of the same element.

Given $S\subset\real^d$ and $p\in\real^d$, let $\dist(p, S)$ denote the
distance from $p$ to $S$, that is, $\dist(p,S)=\inf_{s\in S}\norm{p-s}{}$.
For distinct $p_1\in\real^d$ and $p_2\in\real^d$, let $\ell(p_1,p_2)$ be
the line through $p_1$ and $p_2$.
In what follows, a set of distinct points
$\{p_1,\dots,p_n\}\subset\real^d$, $n\geq3$, is in \emph{stripe-generic
  position} if, given any two ordered subsets $(p_a,p_b,p_c)$ and
$(p_\alpha,p_\beta,p_\gamma)$, either $\dist(p_a, \ell(p_b,p_c)) \neq
\dist(p_\alpha, \ell(p_\beta,p_\gamma))$ or
$(p_a,p_b,p_c)=(p_\alpha,p_\beta,p_\gamma)$.

\section{Abstract optimization}
\label{sec:LP-type problems}
In this section we present an abstract framework~\cite{PKA-SS:01,BG-EW:01}
that captures a wide class of optimization problems including linear
programming and various machine learning and geometric optimization
problems. Abstract optimization problems are also known as \emph{abstract
  linear programs}, \emph{generalized linear programs} or \emph{LP-type
  problems}.

\subsection{Problem setup and examples }
We consider optimization problems specified by a pair $(H, \phi)$, where
$H$ is a finite set, and $\map{\phi}{2^H}{\Phi}$ is a
function\footnote{Given a set $H$, the set $2^H$ is the set of all subsets
  of $H$} with values in a linearly ordered set ($\Phi, \leq$); we assume
that $\Phi$ has a minimum value $-\infty$. The elements of $H$ are called
\emph{constraints}, and for $G\subset H$, $\phi(G)$ is called the
\emph{value} of $G$. Intuitively, $\phi(G)$ is the smallest value
attainable by a certain objective function while satisfying the constraints
of $G$. An optimization problem of this sort is called an \emph{abstract
  optimization program} if the following two axioms are satisfied:
\begin{enumerate}
\item \emph{Monotonicity}: if $F\subset G \subset H$, then $\phi(F) \leq
  \phi(G)$;
\item \emph{Locality}: if $F \subset G \subset H$ with $-\infty < \phi(F) =
  \phi(G)$, then, for all $h\in H$,
  \begin{equation*}
    \phi(G) < \phi(G \union \{h\})
    \enspace\implies\enspace \phi(F) < \phi(F \union \{h\}).
\end{equation*}
\end{enumerate}
A set $B\subset H$ is \emph{minimal} if $\phi(B)>\phi(B')$ for all
proper subsets $B'$ of $B$.  A minimal set $B$ with $-\infty<\phi(B)$ is
a $\emph{basis}$.  Given $G \subset H$, a \emph{basis of $G$} is a minimal
subset $B \subset G$, such that $-\infty<\phi(B) = \phi(G)$.  A
constraint $h$ is said to be \emph{violated} by $G$, if $\phi(G) <
\phi(G \union \{h\})$.

A \emph{solution} of an abstract optimization program $(H,\phi)$ is a
minimal set $B_H\subset H$ with the property that $\phi(B_H) = \phi(H)$.
The \emph{combinatorial dimension} $\delta$ of $(H, \phi)$ is the maximum
cardinality of any basis. Finally, an abstract program is called
\emph{basis regular} if, for any basis with $\card(B)=\delta$ and any
constraint $h\in H$, every basis of $B \union \{h\}$ has the same
cardinality of $B$. We now define two important primitive operations that
are useful to solve abstract optimization problems:
\begin{enumerate}
\item \emph{Violation test}: given a constraint $h$ and a basis $B$, it
  tests whether $h$ is violated by $B$; we denote this primitive by
  $\ViolTest{B}{h}$;
\item \emph{Basis computation}: given a constraint $h$ and a basis $B$, it
  computes a basis of $B\union \{h\}$;  we denote this primitive by
  $\Basis{B}{h}$.
\end{enumerate}

\begin{example}[Abstract framework for linear programs]
  \label{ex:abstract-framework-for-LP}
  We recall from \cite{BG-EW:96} how to transcribe a linear program into an
  abstract optimization program.  A linear program (LP) in ${x\in\real^d}$
  is given by
  \begin{equation*}
    \begin{split}
      \min &\quad c^T x\\
      \subject &\quad a_i^T x \leq b_i, \quad i\in\until{n}
    \end{split}
  \end{equation*}
  where $d\in \natural$ is the state dimension, $c\in\real^d$ characterizes
  the linear cost function to minimize, and $a_i \in \real^{d}$ and $b_i\in
  \real$ describe $n\in\natural$ inequality constraints.  In order to
  transcribe the LP into an abstract program, we need to specify the
  constraint set $H$ and the value $\phi(G)$ for each $G\subset H$. The
  constraint set $H$ is simply the set of half-spaces $\{h_1, \ldots,
  h_n\}$, where $h_i = \setdef{x\in\real^d}{a_i^T x \leq b_i}$.  Defining
  the value function in order to satisfy the monotonicity and locality
  axioms is more delicate: if $\Phi=(\real,\leq)$ and $\phi(G)$ is the
  minimum of $c^T x$ subject to the constraint set $G$, then the locality
  axiom no longer holds (see Section~4 in \cite{BG-EW:96} for a
  counterexample).  A correct choice is as follows: let $(\Phi, \leq)$ be
  the set $\real^d$ with the \emph{lexicographical order},\footnote{In the
    lexicographic order on $\real^2$, we have $(x_1,y_1)\leq(x_2,y_2)$ if
    and only if $x_1<x_2$ or ($x_1=x_2$ and $y_1\leq y_2$).} and define
  $\phi(G) = v_G$, where $v_G \in \real^d$ is the (unique)
  \emph{lexicographically minimal} point $x$ minimizing $c^T x$ over the
  constraint set $G$, when it exists and is bounded. If the problem is
  infeasible (the intersection of the constraints in $G$ is empty), then
  $v_G = \infty_d$.  If the problem is unbounded (no lexicographically
  minimal point exists), then $v_G = -\infty_d$.  If $v_G$ is finite, then
  a basis of $G$ is a minimal subset of constraints $B\subset G$ such that
  $v_B = v_G$.  It is known~\cite{JM-MS-EW:96} that the abstract
  optimization program transcription of a feasible LP is basis regular and
  has combinatorial dimension $d$. A constraint $h\in H$ is violated by $G$
  if and only if $v_G < v_{G\union \{h\}}$.  \oprocend
\end{example}


\begin{example}\textbf{\textup{(Abstract optimization problems in
    geometric optimization)}}
  \label{rem:examples}
  We present three useful geometric examples, illustrated in
  Figure~\ref{fig:geometric-alp}.
  \begin{enumerate}
  \item \emph{Smallest enclosing ball:} Given $n$ distinct points
    in~$\real^d$, compute the center and radius of the ball of smallest
    volume containing all the points.  This problem is~\cite{JM-MS-EW:96}
    an abstract optimization program with combinatorial dimension~$d+1$.

  \item \emph{Smallest enclosing stripe:}
    Given $n$ distinct points in~$\real^2$ in stripe-generic positions,
    compute the center and the width of the stripe of smallest width
    containing all the points. In the Appendix we prove (for the first time
    at the best of our knowledge) that this problem is an abstract
    optimization program with combinatorial dimension~$5$.

  \item \emph{Smallest enclosing annulus:} Given $n$ distinct points
    in~$\real^2$, compute the center and the two radiuses of the annulus of
    smallest area containing all the points.  This problem
    is~\cite{JM-MS-EW:96} an abstract optimization program with
    combinatorial dimension~$4$.
  \end{enumerate}
  \begin{figure}[htbp]
    \centering
    \includegraphics[height=.5\linewidth,angle=90]{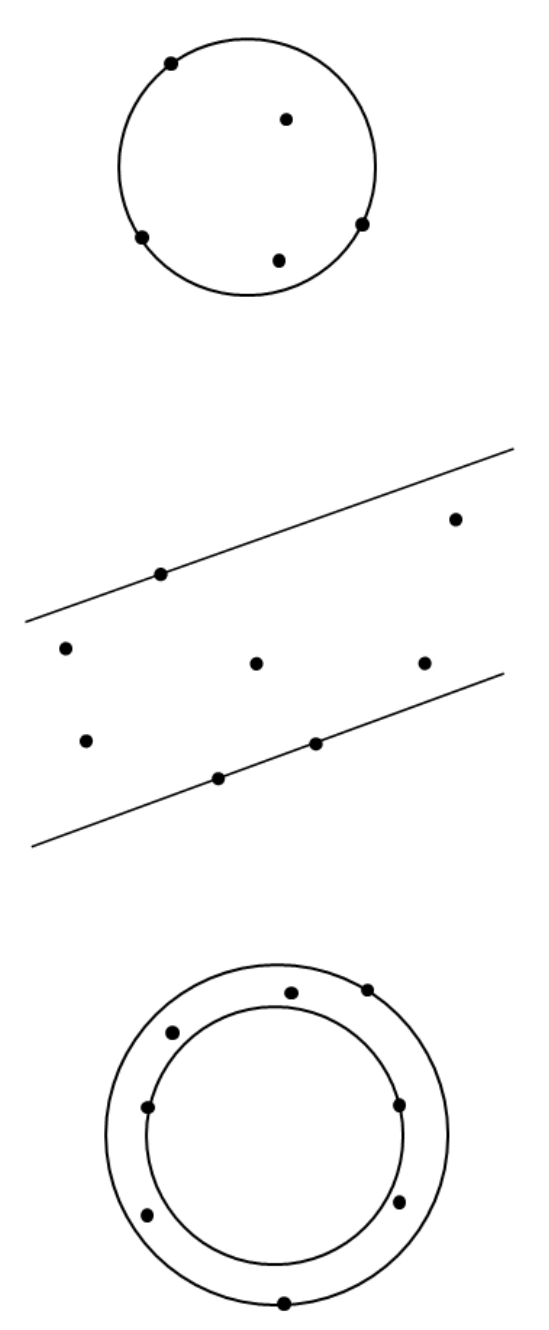}%
    \caption{Smallest enclosing ball, stripe and annulus}
    \label{fig:geometric-alp}
  \end{figure}
  In all three examples, the violation test and the basis computation
  primitives amount to low-dimensional geometric problems and are more or
  less straightforward. Numerous other geometric optimization problems can
  be cast as abstract optimization programs as discussed
  in~\cite{JM-MS-EW:96,BG:95,BG-EW:96,PKA-SS:01} \oprocend
\end{example}

We end this section with a useful lemma, that is an immediate consequence
of locality, and a useful rare property.
\begin{lemma}
  \label{lemma:locality2} For any $F$ and $G$ subsets of $H$,
  $\phi(F\union{G})>\phi(F)$ if and only if there exists $g\in{G}$ such that
  $\phi(F\union\{g\})>\phi(F)$.
\end{lemma}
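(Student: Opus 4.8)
The plan is to prove the two implications separately, since only one of them uses \emph{locality}. The reverse implication follows from \emph{monotonicity} alone: if some $g\in G$ satisfies $\phi(F\cup\{g\})>\phi(F)$, then $F\cup\{g\}\subset F\cup G$ forces $\phi(F\cup G)\geq\phi(F\cup\{g\})>\phi(F)$. This half needs no hypothesis on the value of $F$.

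For the forward implication I would isolate the \emph{first} constraint of $G$ that strictly raises the value. Enumerate $G=\{g_1,\dots,g_k\}$, set $F_0=F$ and $F_j=F_{j-1}\cup\{g_j\}$, so $F_k=F\cup G$. By monotonicity the $\Phi$-valued chain $\phi(F_0)\leq\phi(F_1)\leq\dots\leq\phi(F_k)$ is nondecreasing in the linear order on $\Phi$. If $\phi(F\cup G)>\phi(F)$ the endpoints differ, so there is a least index $j$ with $\phi(F_{j-1})<\phi(F_j)$; minimality of $j$ forces $\phi(F)=\phi(F_0)=\dots=\phi(F_{j-1})$. Applying \emph{locality} to $F\subset F_{j-1}$ with the single constraint $g_j$, the strict inequality $\phi(F_{j-1})<\phi(F_{j-1}\cup\{g_j\})$ yields $\phi(F)<\phi(F\cup\{g_j\})$, so $g=g_j$ is the required witness. (When $j=1$ this is immediate, since then $F_{j-1}=F$.)

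The step I expect to be the genuine obstacle is the boundary case $\phi(F)=-\infty$, which the argument above silently excludes: \emph{locality} is postulated only when $-\infty<\phi(F)=\phi(F_{j-1})$. This exclusion is unavoidable, because the forward implication is actually false at $-\infty$. In the linear-programming transcription of Example~\ref{ex:abstract-framework-for-LP} with $d=2$ and cost $c=\oneRd$, take $F=\emptyset$ (unbounded, so $\phi(F)=-\infty_2$) and let $G$ consist of the half-planes $\{x_1\geq0\}$ and $\{x_2\geq0\}$: then $\phi(F\cup G)$ is finite, yet each single half-plane still leaves the program unbounded, so no single $g$ raises the value. I would therefore state and use the lemma under the standing assumption $-\infty<\phi(F)$ (equivalently, in the regime where the program is bounded), under which the ``first strict increase plus locality'' argument goes through verbatim, while the reverse implication holds unconditionally.
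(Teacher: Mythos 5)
Your proof is correct and takes essentially the same route as the paper's: the reverse implication by monotonicity alone, and the forward implication by locating the first strict increase along the chain $F\subset F\cup\{g_1\}\subset\cdots\subset F\cup G$ and applying locality exactly once at that juncture (the paper phrases this as a descending recursion starting from $g_k$, but it identifies the same witness constraint). Your remark about the boundary case $\phi(F)=-\infty$ is a genuine and correct addition that the paper's proof glosses over: the lemma is stated for arbitrary $F,G\subset H$, yet locality is only postulated when $-\infty<\phi(F)$, and your two-half-plane example shows the forward implication really does fail for an unbounded $F$; the paper escapes harm only because every application of the lemma (in the proof of Theorem~\ref{thm:correctness-of-constraints-consensus}) takes $F$ to be a union of converged candidate bases, which have finite value, so your added hypothesis $-\infty<\phi(F)$ is satisfied wherever the lemma is invoked.
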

\begin{proof}
  If there exists $g\in{G}$ such that $\phi(F\union\{g\})>\phi(F)$, then by
  monotonicity $\phi(F\union{G})\geq \phi(F\union\{g\})>\phi(F)$. For the
  other implication, assume $G=\{g_1,\ldots,g_k\}$ for some $k\in\natural$,
  and define $G_i := \{g_1,\ldots,g_i\}$ for $i\in\until{k}$. We may
  rewrite the assumption $\phi(F\union{G})>\phi(F)$ as $\phi(F\union
  G_{k-1}\union \{g_k\})>\phi(F)$. If $\phi(F\union G_{k-1})=\phi(F)$, then
  the locality axiom implies $\phi(F\union \{g_k\})>\phi(F)$ and the thesis
  follows with $g=g_k$. Otherwise, the same argument may be applied to
  $G_{k-1}$. The recursion stops either when $\phi(F\union{G_{i}})=\phi(F)$
  (and the thesis follows with $g=g_{i+1}$) for some $i$ or when
  $\phi(F\union G_{1})>\phi(F)$ (and the thesis follows with $g=g_1$).
\end{proof}

Next, given an abstract optimization program $(H,\omega)$, let $B_G$ denote
the basis of any $G\subseteq H$.  An element $h$ of $B_H$ is
\emph{persistent} if $h\in B_G$ for all $G\subseteq H$ containing $h$.  An
abstract optimization program $(H,\omega)$ is \emph{persistent} if all
elements of $B_H$ are persistent. The persistence property is useful, as we
state in the following result.
\begin{lemma}
  \label{lemma:all-is-simple-if-persistent}
  Any persistent abstract optimization program $(H,\omega)$ can be solved in a
  number of time steps equal to the dimension of $H$.
\end{lemma}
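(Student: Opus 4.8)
The plan is to exhibit a simple incremental procedure that uses only the two primitives $\ViolTest{\cdot}{\cdot}$ and $\Basis{\cdot}{\cdot}$, and to show that, thanks to persistence, it terminates after at most $\card(B_H)\le\delta$ iterations (one ``time step'' per iteration), where $\delta$ is the combinatorial dimension and $\phi$ denotes the value function of $(H,\omega)$. Concretely, I would initialize a candidate basis $B^{(0)}$ (the empty set, or the basis of a single constraint) and, at each step $t$, scan $H$ for a constraint $h_t$ violated by $B^{(t-1)}$, i.e.\ with $\phi(B^{(t-1)}\union\{h_t\})>\phi(B^{(t-1)})$, and then update $B^{(t)}\leftarrow\Basis{B^{(t-1)}}{h_t}$. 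The procedure halts when no constraint of $H$ is violated by the current basis.

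The first thing I would verify is correctness of the halting condition: if no $h\in H$ is violated by $B^{(t)}$, then Lemma~\ref{lemma:locality2} (applied with $F=B^{(t)}$ and $G=H$) forces $\phi(B^{(t)}\union H)=\phi(B^{(t)})$, hence $\phi(B^{(t)})=\phi(H)$; since $B^{(t)}$ is by construction a minimal set of finite value, it is a solution. The crux, however, is the iteration count, and here I would use persistence to show that every step can be made to capture a \emph{new, permanent} element of the optimal basis $B_H$.

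For the step-progress claim, suppose $\phi(B^{(t-1)})<\phi(H)$. Applying Lemma~\ref{lemma:locality2} with $F=B^{(t-1)}$ and $G=B_H$, and noting $\phi(B^{(t-1)}\union B_H)\ge\phi(B_H)=\phi(H)>\phi(B^{(t-1)})$ by monotonicity, there must exist $g\in B_H$ violated by $B^{(t-1)}$; I would select this $g$ as $h_t$. Then I claim $\{h_1,\dots,h_t\}\subseteq B^{(t)}$: each chosen $h_s$ lies in $B_H$, hence is persistent, and since $\{h_1,\dots,h_t\}\subseteq B^{(t-1)}\union\{h_t\}$, persistence forces each such element into the basis $B_{B^{(t-1)}\union\{h_t\}}=B^{(t)}$. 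Moreover $h_t$ is distinct from the earlier choices, because $h_t$ is violated by $B^{(t-1)}\supseteq\{h_1,\dots,h_{t-1}\}$ and an element already present cannot be violating. Consequently the chosen constraints form a strictly increasing chain of distinct elements of $B_H$, so the procedure runs for at most $\card(B_H)\le\delta$ steps before the halting condition is met.

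The step I expect to be the main obstacle is this permanence argument, i.e.\ pinning down exactly how persistence guarantees that no previously-captured optimal-basis element is ever discarded by a later call to $\Basis{\cdot}{\cdot}$. This is where the definition of persistence (each $h\in B_H$ lies in $B_G$ for \emph{every} $G\ni h$) does the essential work, and it implicitly relies on $B_G$ being well defined (unique) so that the basis returned by the primitive coincides with the one in the persistence hypothesis; I would either invoke uniqueness of bases or restrict $\Basis{\cdot}{\cdot}$ to return that canonical basis. A secondary point to handle cleanly is the base case $B^{(0)}$ together with the standing assumption $\phi(H)>-\infty$ (a feasible, bounded program), without which the notion of ``solving'' the program is vacuous.
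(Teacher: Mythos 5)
Your permanence argument (persistence keeps every captured element of $B_H$ inside all subsequent bases) is exactly the paper's key observation, and your halting condition via Lemma~\ref{lemma:locality2} is correct. The gap is in the step-progress claim. Your procedure ``scans $H$ for a constraint $h_t$ violated by $B^{(t-1)}$,'' but your iteration count requires $h_t$ to be the particular violator $g\in B_H$ whose existence Lemma~\ref{lemma:locality2} guarantees. The algorithm has no way to recognize which violated constraints belong to $B_H$ --- that is precisely the unknown it is trying to compute --- and constraints in $H\setminus B_H$ need not be persistent. If the scan returns an arbitrary violated constraint, a step may incorporate a constraint that a later call to $\Basis{\cdot}{\cdot}$ discards, and the only bound you then get on the number of iterations is the number of distinct basis values, not $\card(B_H)\le\delta$. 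As written, your argument therefore shows only that \emph{some} sequence of at most $\delta$ basis computations reaches the solution; it is an existence statement, not a procedure. Moreover, each of your iterations hides a full scan of $H$ (up to $n$ violation tests), so even granting the oracle the operation count is of order $\delta n$, not $\delta$.

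The paper sidesteps all of this by reading ``the dimension of $H$'' as $\card(H)=n$ and making one fixed deterministic sweep: set $B=\{h_1,\dots,h_\delta\}$ and update $B:=\Basis{B}{h_k}$ for $k=\delta+1,\dots,n$. Persistence guarantees that each $h\in B_H$ enters $B$ the round it is processed and is never removed afterwards, so after a single pass $B\supseteq B_H$ and hence $\phi(B)=\phi(H)$ by monotonicity; no violator selection is needed. Your uniqueness-of-basis caveat applies equally to the paper's proof (the definition of persistence implicitly treats $B_G$ as well defined), so that part of your discussion is well taken. To repair your version, either fall back on the value-counting termination bound (correct but weaker) or adopt the fixed sweep.
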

\begin{proof}
  Let $H = \{h_1, \ldots, h_n\}$. Set $B = \{h_1, \ldots, h_\delta\}$ and
  then update $B = \Basis{B}{h_k}$ for $k = \delta+1, \ldots, n$. Because
  of persistency, each $h\in B_H$ is added to $B$ once it is selected as
  $h_k$ and is not removed from $B$ in subsequent basis computations.
\end{proof}
Unfortunately, the persistence property is rare.  Indeed, in
Figure~\ref{fig:LP_counter_example} we show an LP problem where the
persistency property does not hold. In fact, it can be easily noticed that
$\{h_1,h_2\}$ is a basis for $\{h_1,h_2,h_3,h_4\}$, but $\{h_3, h_4\}$ is a
basis for $\{h_2,h_3,h_4\}$. In other words $\{h_2\}$ is not violated by
$\{h_3,h_4\}$.  The lack of persistency complicates the design of
centralized and distributed solvers for abstract optimization problems. For
example, in network settings, flooding algorithms are not sufficient.

\begin{figure}[htbp]
  \centering
  \includegraphics[width=.3\linewidth]{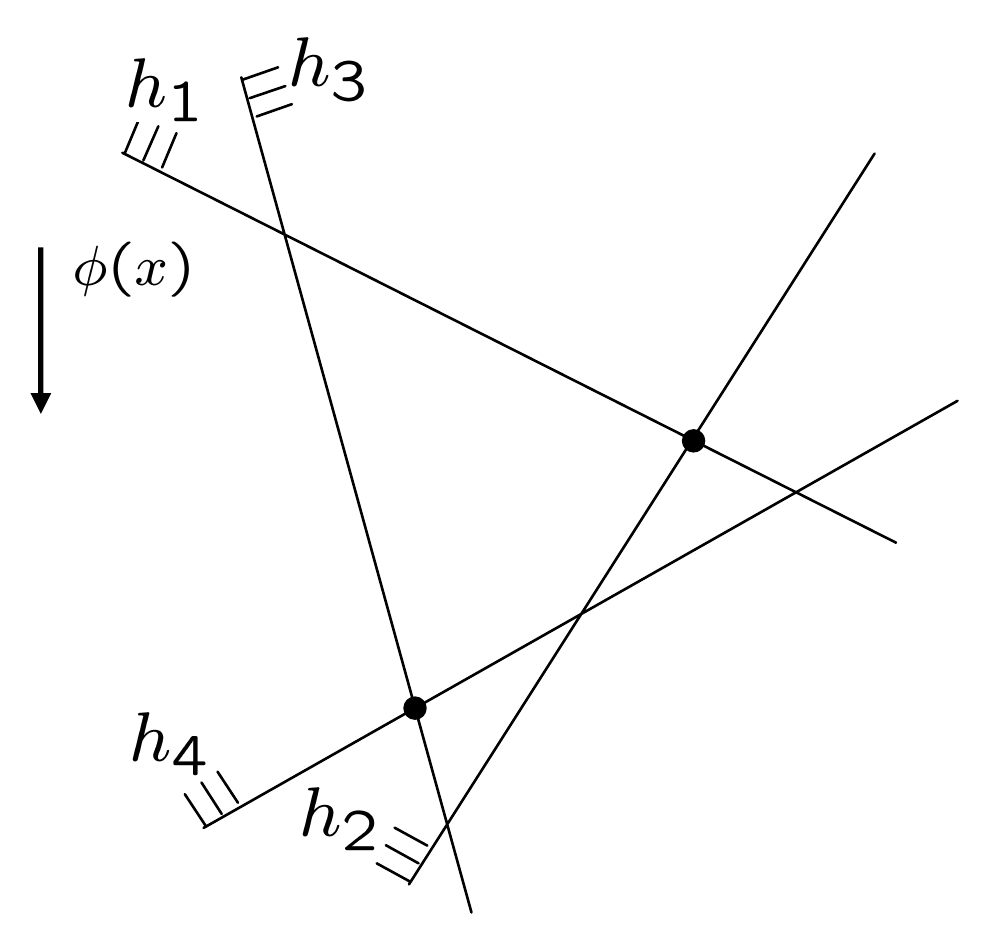}
  \caption{An LP problem that is not persistent.}
  \label{fig:LP_counter_example}
\end{figure}

\subsection{Randomized sub-exponential algorithm}
In the following sections we will assume that each node in the network
possesses a routine capable of solving small-dimensional abstract
optimization programs. For completeness' sake, this section reviews the
randomized algorithm proposed in \cite{JM-MS-EW:96}. This algorithm has
expected running time with linear dependence on the number of constraints,
whenever the combinatorial dimension $\delta$ is fixed, and with
sub-exponential dependence on the $\delta$; these bounds are proven in
\cite{JM-MS-EW:96} for linear programs and in \cite{BG-EW:96} for general
abstract optimization programs.
The algorithm, called $\subexLP$, has a recursive structure and is based on
the violation test and the basis computation primitives.  Given a set of
constraints $G$ and a candidate basis $C \subset G$, the algorithm is
stated as follows:
\begin{center}
\begin{minipage}[c]{.9\textwidth}
\textbf{function} $\subexLP(G, C)$
\begin{algorithmic}[1]
\STATE \textbf{if} $G = C$, \textbf{then} \textbf{return} $C$
\STATE \textbf{else}
\STATE \quad choose a random $h\in G \setminus C$ and
compute $B := \subexLP ( G\setminus\{h\}, C)$
\STATE \quad \textbf{if} {$\ViolTest{B}{h}$  (that is, $h$ is violated by
  $B$),} \textbf{then}
\STATE \quad \quad compute $B' := $ basis for $B\union\{h\}$
\STATE \quad \quad \textbf{return} $\subexLP(G,\Basis{B}{h})$
\STATE \quad \textbf{else} \textbf{return} $B$
\STATE \quad \textbf{endif}
\STATE \textbf{endif}
\end{algorithmic}
\end{minipage}
\end{center}

\noindent For the abstract optimization program $(H,\phi)$, the routine is
invoked with $\subexLP(H,B)$, given any initial candidate basis $B$.

\section{Network models}
\label{sec:network-modeling}
Following \cite{NAL:97}, we define a synchronous network system as a
``collection of computing elements located at nodes of a directed network
graph.''  We refer to computing elements are processors.

\subsection{Digraphs and connectivity}
We let $\GG = (\until{n}, E)$ denote a directed graph (or digraph), where
$\until{n}$ is the set of nodes and $E\subset \until{n}^2$ is the set of
edges.  For each node $i$ of $\GG$, the number of edges going out from
(resp. coming into) node $i$ is called \emph{out-degree}
(resp. \emph{in-degree}).
A digraph is \emph{strongly connected} if, for every pair of nodes $(i, j)
\in \until{n}\times\until{n}$, there exists a path of directed edges that
goes from $i$ to $j$. A digraph is \emph{weakly connected} if replacing all
its directed edges with undirected edges results in a connected undirected
graph. In a strongly connected digraph, the minimum number of edges between
node $i$ and $j$ is called the \emph{distance from $i$ to $j$} and is
denoted $\dist(i,j)$. The maximum $\dist(i,j)$ taken over all pairs $(i,j)$
is the \emph{diameter} and is denoted $\diam(\GG)$.  Finally, we consider
time-dependent digraphs of the form $t\mapsto \GG(t) = (\until{n},
E(t))$. The time-dependent digraph $\GG$ is \emph{jointly strongly
  connected} if, for every $t \in \naturalzero$, the digraph
$\union_{\tau=t}^{+\infty} \GG(\tau)$ is strongly connected.

In a time-dependent digraph, the set of outgoing (incoming) neighbors of
node $i$ at time $t$ are the set of nodes to (from) which there are edges
from (to) $i$ at time $t$. They are denoted by $\outnbrs(i,t)$ and
$\innbrs(i,t)$, respectively.

\subsection{Synchronous networks and distributed algorithms}
\label{subsec:networkmodel+distributedalgos}
A \emph{synchronous network} is a time-dependent digraph $\GG = (\until{n},
\subscr{E}{cmm})$, where $\until{n}$ is the set of \emph{identifiers} of
the processors, and the time-dependent set of edges $\subscr{E}{cmm}$
describes communication among processors as follows: $(i,j)$ is in
$\subscr{E}{cmm}(t)$ if and only if processor $i$ can communicate to
processor $j$ at time $t\in\integernonnegative$.

For a synchronous network $\GG$ with processors $\until{n}$, a
\emph{distributed algorithm} consists of (1) the set $W$, called the set of
\emph{processor states} $\supind{w}{i}$, for all $i\in\until{n}$; (2) the
set $\alphabet$, called the \emph{message alphabet}, including the $\nll$
symbol; (3) the map $\map{\msg}{W \times\until{n}}{\alphabet}$, called the
\emph{message-generation function}; and (4) the map $\map{\stf}{W
  \times\alphabet^n}{W}$, called the \emph{state-transition function}.
The execution of the distributed algorithm by the network begins with all
processors in their start states.  The processors repeatedly perform the
following two actions. First, the $i$th processor sends to each of its
outgoing neighbors in the communication graph a message (possibly the
$\nll$ message) computed by applying the message-generation function to the
current value of $\supind{w}{i}$.  After a negligible period of time, the
$i$th processor computes the new value of its processor state
$\supind{w}{i}$ by applying the state-transition function to the current
value of $\supind{w}{i}$, and to the incoming messages (present in each
communication edge). The combination of the two actions is called a
\emph{communication round} or simply a round.

In this execution scheme we have assumed that each processor executes all
the calculations in one round. If it is not possible to upper bound the
execution-time of the algorithm, then one may consider a slightly different
network model that allows the state-transition function to be executed
across multiple rounds. When this happens, the message is generated by
using the processor state at the previous round.

The last aspect to consider is the \emph{algorithm halting}, that is a
situation such that the network (and therefore each processor) is in a idle
mode.  Such status can be used to indicate the achievement of a prescribed
task.  Formally we say that a distributed algorithm is in halting status if
the processor state is a fixed point for the state-transition function
(that becomes a self-loop) and no message (or equivalently the $\nll$
message) is generated at each node.

\section{Distributed abstract optimization}
\label{sec:network-ALP}
In this section we define distributed abstract programs, propose novel
distributed algorithms for their solutions and analyze their correctness.

\subsection{Problem statement}
Informally, a \emph{distributed abstract program} consists of three main
elements: a network, an abstract optimization program and a mechanism to
distribute the constraints of the abstract program among the nodes of the
network.
\begin{definition}
  A distributed abstract program is a tuple $( \GG, (H, \phi),
  \NALPmap)$ consisting of
  \begin{enumerate}
  \item $\GG = (\until{n}, \subscr{E}{cmm})$, a synchronous network;
  \item $(H, \phi)$, an abstract program; and
  \item $\map{\NALPmap}{H}{\until{n}}$, a surjective map called
    \emph{constraint distribution map} that associates to each constraint
    one network node.
  \end{enumerate}
  If the map $\NALPmap$ is a bijection, we denote the distributed abstract
  program with the pair $(\GG, (H, \phi))$. A \emph{solution} of $( \GG,
  (H, \phi), \NALPmap)$ is attained when all network processors have
  computed a solution to $(H, \phi)$.
\end{definition}

\begin{remark}
  The most natural choice of constraint distribution map $\NALPmap$ is a
  bijection; in this case, (i) the network dimension is equal to the
  dimension of the abstract optimization program and (ii) precisely one
  constraint is assigned to each network node.  More complex distribution
  maps are interesting depending on the computation power and memory of the
  network processors.  In what follows, we typically assume $\NALPmap$ to
  be a bijection. \oprocend
\end{remark}

\subsection{Constraints consensus algorithms}
Here we propose three novel distributed algorithms that solve distributed
abstract programs.  First, we describe a distributed algorithm that
is well-suited for time-dependent networks whose nodes have bounded
computation time, memory and in-degree.  Equivalently, the algorithm is
applicable to networks with arbitrary in-degree, but also arbitrary
computation time and memory.
Then we describe two variations that deal with arbitrary in-degree versus
short computation time and small memory. The second version of the
algorithm is well-suited for time-dependent networks that have arbitrary
in-degree and bounded computation time, but also arbitrary memory (in the
sense that the number of stored messages may depend on the number of nodes
of the network).  The third algorithm considers the case of
time-independent networks with arbitrary in-degree and bounded computation
time and memory.

In all algorithms we consider a synchronous network $\GG$ and an abstract
program $(H, \phi)$ with $H = \{h_1, \cdots, h_n\}$ and with combinatorial
dimension $\delta$.  We define a distributed abstract program by assuming
that constraints and nodes are in a one-to-one relationship, and we let
$h_i$ be the constraint associated with network node $i$. Here is an
informal description of our first algorithm.
\begin{quote}
  \emph{Constraints Consensus Algorithm}: Beside having access to the
  constraint $h_i$, the $i$th processor state contains a candidate basis
  $\supind{B}{i}$ consisting of $\delta$ elements of $H$.  The processor
  state $\supind{B}{i}$ is initialized to $\delta$ copies of $h_i$. At each
  communication round, the processor performs the following tasks: (i) it
  transmits $\supind{B}{i}$ to its out-neighbors and acquires from its
  in-neighbors their candidate bases; (ii) it solves an abstract optimization
  program with constraint set given by the union of: its constraint $h_i$,
  its candidate basis $\supind{B}{i}$ and its in-neighbors' candidate
  bases; (iii) it updates $\supind{B}{i}$ to be the solution of the
  abstract program computed at step (ii).
\end{quote}
For completeness' sake, the following table presents the algorithm in a way
that is compatible with the model given in
Section~\ref{subsec:networkmodel+distributedalgos}.  The $\subexLP$
algorithm is adopted as local solver for abstract optimization programs.

\bigskip \hrule width \linewidth \smallskip

\noindent\begin{minipage}{0.44\linewidth}\textbf{\texttt{Problem data:}}%
\end{minipage}%
\begin{minipage}{0.56\linewidth}$(\GG,(H,\phi))$%
\end{minipage}

\noindent\begin{minipage}{0.44\linewidth}\textbf{\texttt{Algorithm:}}%
\end{minipage}%
\begin{minipage}{0.56\linewidth}Constraints Consensus%
\end{minipage}

\noindent\begin{minipage}{0.44\linewidth}\textbf{\texttt{Message alphabet:}}%
\end{minipage}%
\begin{minipage}{0.56\linewidth}$\alphabet = H^\delta\union\{\nll\}$%
\end{minipage}

\noindent\begin{minipage}{0.44\linewidth}\textbf{\texttt{Processor state:}}%
\end{minipage}%
\begin{minipage}{0.56\linewidth}$\supind{B}{i}\subset H$ with
  $\card(\supind{B}{i})=\delta$%
\end{minipage}

\noindent\begin{minipage}{0.44\linewidth}\textbf{\texttt{Initialization:}}%
\end{minipage}%
\begin{minipage}{0.56\linewidth}$\supind{B}{i} := \{h_i,\dots,h_i\}$%
\end{minipage}

\bigskip

\noindent\textbf{\texttt{function}}  $\msg(\supind{B}{i}, j)$
\begin{algorithmic}[1]
  \STATE \textbf{return} $\supind{B}{i}$
\end{algorithmic}

\medskip

\noindent\textbf{\texttt{function}}  $\stf(\supind{B}{i}, y)$ \\
\emph{\% executed by node~$i$, with $y_j := \msg(\supind{B}{j},i)=\supind{B}{j}$}\\[-1.9em]
\begin{algorithmic}[1]
  \STATE $\subscr{H}{tmp} := \{h_i\} \union \supind{B}{i} \union \big(
  \union_{j\in \innbrs(i)} y_j \big)$
  \STATE \textbf{return} $\subexLP( \subscr{H}{tmp}, \supind{B}{i})$
\end{algorithmic}

\smallskip \hrule width \linewidth \medskip

\begin{remark}\textbf{\textup{(Constraint re-examination due to lack of persistency)}}
  In order for the algorithm to compute a correct solution, it is necessary
  that each node continuously re-examine its associated constraint
  throughout algorithm execution. In other words, step \algofont{1:} of the
  state-transition function $\stf$ in the algorithm may \emph{not} be
  replaced by $\subscr{H}{tmp} := \supind{B}{i} \union \big( \union_{j\in
    \innbrs(i)} y_j \big)$.  This continuous re-examination is required
  because of the lack of the persistency property discussed after
  Lemma~\ref{lemma:all-is-simple-if-persistent}.  \oprocend
\end{remark}

In the second scenario we consider a time-dependent network with no bounds
on the in-degree of the nodes and on the memory size. In this setting the
execution of the $\subexLP$ may exceed the computation time allocated
between communication rounds.  To deal with this problem, we introduce an
``asynchronous'' version of the network model described in
Section~\ref{sec:network-modeling}: we allow a processor to execute
message-transmission and state-transition functions at instants that are
not necessarily synchronized. Here is an informal description of the
algorithm.
\begin{quote}
  \emph{Multi-round constraints consensus algorithm} Each processor has the
  same message alphabet, processor state, and initialization settings as in
  the previous \emph{constraints consensus algorithm}.
  The processor performs two tasks in parallel. Task \#1: at each
  communication round, the processor transmits to its out-neighbors its
  candidate basis $\supind{B}{i}$ and acquires from its in-neighbors their
  candidate bases.
  Task \#2: independently of communication rounds, the processor repeatedly
  solves an abstract optimization program with constraint set given by the
  union of: its constraint $h_i$, its candidate basis $\supind{B}{i}$ and
  its in-neighbors' candidate bases; the solution of this abstract program
  becomes the new candidate basis $\supind{B}{i}$.
  The abstract program solver is invoked with the most-recently
  available in-neighbors' candidate bases and, throughout its execution,
  this information does not change.
\end{quote}

In the third scenario we consider a time-independent network with no bounds
on the in-degree of the nodes. We suppose that each processor has limited
memory capacity, so that it can store at most $D$ constraints in $H$. The
memory is dimensioned so as to guarantee that the abstract optimization
program is always solvable during two communication rounds (e.g., by
adopting the $\subexLP$ solver). The memory constraint is dealt with by
processing only part of the incoming messages at each round, and by cycling
among incoming messages in such a way as to process all the messages in
multiple rounds.
\begin{quote}
  \emph{Cycling constraints consensus algorithm} The processor state
  contains and initializes a candidate basis $\supind{B}{i}$ as in the
  basic constraints consensus algorithm.  Additionally, the processor state
  includes a counter variable that keeps track of communication rounds.
  At each communication round, the processor performs the following tasks:
  (i) it transmits $\supind{B}{i}$ to its out-neighbors and receives from
  its in-neighbors their candidate bases;
  (ii) among the incoming messages, it chooses to store $D$ messages
  according to a scheduled protocol and the counter variable;
  (iii) it solves an abstract optimization program with constraint set
  given by the union of: its constraint $h_i$, its candidate basis
  $\supind{B}{i}$ and the $D$ candidate bases from its in-neighbors; (iv)
  it updates $\supind{B}{i}$ to be the solution of the abstract program
  computed at step (iii).
\end{quote}

\subsection{Algorithm analysis}
We are now ready to analyze the algorithms. In what follows, we
discuss correctness, halting conditions, memory complexity and time
complexity.

\begin{theorem}\textbf{\textup{(Correctness of the constraints consensus
      algorithm)}}
  \label{thm:correctness-of-constraints-consensus}
  Let $(\GG,(H,\phi))$ be a distributed abstract program with nodes and
  constraints in one-to-one relationship. Assume the time-dependent network
  $\GG$ is jointly strongly connected. Consider a constraint consensus
  algorithm in which a network node initializes its candidate basis to a
  constraint set with finite value.  The following statements hold:
  \begin{enumerate}
  \item\label{fact:monotonicity} along the evolution of the constraints
    consensus algorithm, the basis value $t\mapsto \phi(\supind{B}{i}(t))$
    at each node $i\in\until{n}$ is monotonically non-decreasing and
    converges to a constant finite value in finite time;

  \item\label{fact:convergence} the constraints consensus algorithm solves
    the distributed abstract program $(\GG,(H,\phi))$, that is, in
    finite time the candidate basis $\supind{B}{i}$ at each node $i$ is a
    solution of $(H,\phi)$; and

  \item\label{fact:uniqueness} if the distributed abstract program has a
    unique minimal basis $B_H$, then the final candidate basis
    $\supind{B}{i}$ at each node $i$ is equal to $B_H$.
  \end{enumerate}
\end{theorem}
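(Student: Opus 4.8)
The plan is to prove the three claims in the order stated, with the bulk of the work in part~\ref{fact:convergence}. For part~\ref{fact:monotonicity}, I would first observe that the update at node~$i$ returns a basis of $\subscr{H}{tmp} = \{h_i\}\union\supind{B}{i}\union(\union_{j\in\innbrs(i)}y_j)$, a set that contains the previous candidate basis $\supind{B}{i}$. Hence by the monotonicity axiom $\phi(\supind{B}{i}(t+1)) = \phi(\subscr{H}{tmp})\geq\phi(\supind{B}{i}(t))$, so $t\mapsto\phi(\supind{B}{i}(t))$ is non-decreasing. Since every candidate basis is one of the finitely many $\delta$-element subsets of $H$, the value takes finitely many values and is bounded above by $\phi(H)$; a non-decreasing sequence in a finite ordered set is eventually constant, giving convergence in finite time to a finite limit. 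Let $T$ be a time after which all $n$ node values are constant.

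For part~\ref{fact:convergence}, I would first show that all limiting values coincide. For any edge $(j,i)$ present at some time $t\geq T$, node $i$ incorporates $\supind{B}{j}(t)$ into $\subscr{H}{tmp}$ but its value does not increase, so $\phi(\supind{B}{j})\leq\phi(\subscr{H}{tmp})=\phi(\supind{B}{i})$ by monotonicity. Because $\GG$ is jointly strongly connected, the union of the graphs over $[T,\infty)$ is strongly connected, so chaining this inequality around a directed cycle through all nodes forces a single common limiting value $\phi_\infty$. The crux is then to prove $\phi_\infty=\phi(H)$, which I would do by contradiction, assuming $\phi_\infty<\phi(H)$. The engine is a \emph{forward propagation of violation} lemma built on the locality axiom: fix a constraint $g\in H$ and an edge $(a,b)$ at a time $t\geq T$; since all limiting values equal $\phi_\infty$ and $\supind{B}{a}(t)\subset\subscr{H}{tmp}$, if $g$ is violated by $\supind{B}{a}(t)$ then $\phi(\subscr{H}{tmp}\union\{g\})\geq\phi(\supind{B}{a}(t)\union\{g\})>\phi_\infty=\phi(\subscr{H}{tmp})$, and applying locality to the pair $\supind{B}{b}(t+1)\subset\subscr{H}{tmp}$ (both of value $\phi_\infty$) shows that $g$ is also violated by $\supind{B}{b}(t+1)$. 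The same computation with $a=b$ shows that, once $g$ is violated by some node's basis after $T$, it stays violated there. Thus the set $S(t)=\{i : g\text{ is violated by }\supind{B}{i}(t)\}$ is non-decreasing and absorbs every out-neighbor of a current member.

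To close the contradiction I would pick any node, say node~$1$: since $\phi(\supind{B}{1}(T))=\phi_\infty<\phi(H)$, Lemma~\ref{lemma:locality2} yields a constraint $g=h_k$ violated by $\supind{B}{1}(T)$, so $S(T)\neq\emptyset$. If $S$ ever stabilized at a proper subset $S^\ast$, then after that time no edge would leave $S^\ast$, contradicting joint strong connectivity; hence $S(t)$ grows to include every node, in particular node $k$ itself. But node $k$ re-examines $h_k$ at every round, so $h_k$ can never be violated by $\supind{B}{k}$ --- a contradiction. Therefore $\phi_\infty=\phi(H)$, and every $\supind{B}{i}$ is a minimal subset of value $\phi(H)$, i.e.\ a solution of $(H,\phi)$.

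I expect the propagation lemma to be the delicate step. The lack of persistency (the counterexample after Lemma~\ref{lemma:all-is-simple-if-persistent}) means one cannot simply track the optimal constraints as they are forwarded, since they may be dropped from a recomputed basis; this is precisely why a naive flooding or fixed-point argument fails. Locality rescues the argument by letting a \emph{violation} (rather than a constraint) travel forward across an edge, and the re-examination step is exactly what makes the home node $k$ of the violated constraint unable to tolerate that violation. Finally, part~\ref{fact:uniqueness} is then immediate: by part~\ref{fact:convergence} each final $\supind{B}{i}$ is a solution, and if the minimal basis $B_H$ is unique then every $\supind{B}{i}$ must equal $B_H$, which also rules out any residual oscillation among distinct equal-value bases.
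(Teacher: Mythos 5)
Your proof is correct, and its central step takes a genuinely different route from the paper's. Parts~\ref{fact:monotonicity} and the ``all limiting values coincide'' step match the paper (the paper chains the inequality $\phi(\supind{B}{i})\leq\phi(\supind{B}{k_1})\leq\cdots$ along a time-dependent directed path; your version, which observes that $\phi(\supind{B}{j})\leq\phi(\supind{B}{i})$ for every edge $(j,i)$ of the static union graph $\union_{\tau\geq T}\GG(\tau)$ and chains in both directions, is an equivalent and slightly cleaner packaging). The divergence is in proving $\phi_\infty=\phi(H)$. The paper argues structurally about the limit bases: an induction over weakly-connected subgraphs of $\union_{\tau\geq T}\GG(\tau)$ establishes $\phi(\supind{B}{1}\union\cdots\union\supind{B}{n})=\bar\phi$ (equation~\eqref{eq:globalopt}), using Lemma~\ref{lemma:locality2} and locality to push a hypothetical violating $g\in\supind{B}{k+1}$ down to a single neighbor $p$ and contradict convergence; then re-examination shows no $h_i$ violates the union, so the union has value $\phi(H)$. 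You instead run a dynamic argument: Lemma~\ref{lemma:locality2} produces a single constraint $h_k$ violated by some node's limit basis, and your violation-propagation lemma (monotonicity to lift the violation from $\supind{B}{a}(t)$ to $\subscr{H}{tmp}$, then locality to push it down to the new basis $\supind{B}{b}(t+1)$) makes the set $S(t)$ of nodes whose bases are violated by $h_k$ monotone and absorbing, so joint strong connectivity forces it to reach node $k$, where re-examination gives the contradiction. Both arguments lean on exactly the same two ingredients --- locality and re-examination --- but yours trades the paper's static induction on subgraphs for a forward-in-time invariant, which generalizes somewhat more transparently to the time-varying setting and makes explicit \emph{why} re-examination is the indispensable design detail.

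One small gap to patch: you assert in part~\ref{fact:monotonicity} that the limit is \emph{finite}, and your locality applications (both in the propagation lemma and implicitly in concluding that each $\supind{B}{i}$ is a basis) require $-\infty<\phi_\infty$; but an eventually-constant non-decreasing sequence could in principle be constant at $-\infty$. The paper closes this explicitly: since one node initializes with a finite-value candidate basis and the digraph is jointly strongly connected, monotonicity propagates a finite value to every node in finite time. Add that sentence and your argument is complete.
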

 \begin{proof}
   From the monotonicity axiom of abstract optimization and from the
   finiteness of $H$, it follows that each sequence
   $\phi(\supind{B}{i}(t))$, $t\in \naturalzero$, is monotone
   non-decreasing, upper bounded and can assume only a finite number of
   values.  Additionally, in finite time, each node has a candidate basis
   that has finite value because, by assumption, one node starts with a
   candidate basis with finite value and the digraph is jointly strongly
   connected.  Therefore, the constraints consensus algorithm at every node
   converges to a constant candidate basis with finite value in a finite
   number of steps. This concludes the proof of
   fact~\ref{fact:monotonicity}.  In what follows, let
   $\supind{B}{1},\dots,\supind{B}{n}$ denote the limiting candidate bases
   at each node in the graph.

   We prove fact~\ref{fact:convergence} in three steps.  First, we proceed
   by contradiction to prove that all the nodes converge to the same value
   (but not necessarily the same basis). The following fact is known: if a
   time-dependent digraph is jointly strongly connected, then the digraph
   contains a time-dependent directed path from any node to any other node
   beginning at any time, that is, for each $t \in \naturalzero$ and each
   pair $(i,j)$, there exists a sequence of nodes $\ell_1,\dots,\ell_k$ and
   a sequence of time instants $t_1, \dots, t_{k+1}\in \naturalzero$ with
   $t \leq t_1 < \dots < t_{k+1}$, such that the directed edges
   $\{(i,\ell_1),(\ell_1,\ell_2), \dots, (\ell_k,j)\}$ belong to the digraph
   at time instants $\{t_1,\dots,t_{k+1}\}$, respectively.  The proof by
   contradiction of a closely related fact is given in
   \cite[Theorem~9.3]{JMH:08}.  Now, suppose that at time $t_0$ all the
   nodes have converged to their limit bases and that there exist at least
   two nodes, say $i$ and $j$, such that $\phi(\supind{B}{i}) \neq
   \phi(\supind{B}{j})$. For $t = t_0 + 1$, for every $k_1 \in
   \outnbrs(i,t_0+1)$, no constraint in $\supind{B}{i}$ violates
   $\supind{B}{k_1}$, otherwise node $k_1$ would compute a new distinct
   basis with strictly larger value, thus violating the assumption that all
   nodes have converged. Therefore, $\phi(\supind{B}{i}) \leq
   \phi(\supind{B}{k_1})$.  Using the same argument at $t = t_0 + 2$, for
   every $k_2 \in \outnbrs(k_1,t_0+2)$, no constraint in $\supind{B}{k_1}$
   violates $\supind{B}{k_2}$.  Therefore, $\phi(\supind{B}{i}) \leq
   \phi(\supind{B}{k_1}) \leq \phi(\supind{B}{k_2})$.  Iterating this
   argument, we can show that for every $S>0$, every node $k$, that is
   reachable from $i$ in the time-dependent digraph with a time-dependent
   directed path of length at most $S$, has a basis $\supind{B}{k}$ such
   that $\phi(\supind{B}{i}) \leq \phi(\supind{B}{k})$.  However, because
   the digraph is jointly strongly connected, we know that there exists a
   time-dependent directed path from node $i$ to node $j$ beginning at time
   $t_0$, thus showing that $\phi(\supind{B}{i}) \leq \phi(\supind{B}{j})$.
   Repeating the same argument by starting from node $j$ we obtain that
   $\phi(\supind{B}{j}) \leq \phi(\supind{B}{i})$. In summary, we showed
   that $\phi(\supind{B}{i}) = \phi(\supind{B}{j})$, thus giving the
   contradiction.  Note that this argument also proves that, if $(i,j)$ is
   an edge of the digraph $\union_{\tau=t}^{+\infty} \GG(\tau)$, then no
   constraint in $i$ violates $\supind{B}{j}$ and, therefore,
   $\phi(\supind{B}{i}\union\supind{B}{j})=\phi(\supind{B}{j})$.  Also, the
   equality
   $\supind{B}{i}\union\supind{B}{j}=\supind{B}{j}\union\supind{B}{i}$
   implies that there exists $\bar{\phi}\in\real$ such that $\bar{\phi}
   =\phi(\supind{B}{i}) =\phi(\supind{B}{j})
   =\phi(\supind{B}{i}\union\supind{B}{j})
   =\phi(\supind{B}{j}\union\supind{B}{i})$ for all $i\in\until{n}$ and
   $(i,j)$ edges of $\union_{\tau=t}^{+\infty} \GG(\tau)$.

   Second, we claim that the value of the basis at each node is equal to
   the value of the union of all the bases. In other words, we claim that
   \begin{equation}
     \label{eq:globalopt}
     \bar{\phi} = \phi(\supind{B}{1}\union\cdots \union \supind{B}{n}).
   \end{equation}
   We prove equation~\eqref{eq:globalopt} by induction.  First, we note
   that $\bar{\phi} = \phi(\supind{B}{i}\union\supind{B}{j})$ for any nodes
   $i$ and $j$ such that either $(i,j)$ or $(j,i)$ is a directed edge in
   $\union_{\tau=t}^{+\infty} \GG(\tau)$.  Without loss of generality, let
   us assume $i=1$ and $j=2$.  Now assume that
   \begin{equation}
     \label{eq:no-idea}
     \phi(\supind{B}{1}\union\cdots \union \supind{B}{k})
     = \bar{\phi},
   \end{equation}
   for an arbitrary $k$-dimensional weakly-connected subgraph $G_k$ of
   $\union_{\tau=t}^{+\infty} \GG(\tau)$ and we prove such a statement for
   a weakly-connected subgraph of dimension $k+1$ containing $G_k$.  By
   contradiction, we assume the statement is not true for $k+1$.  Assuming,
   without loss of generality, that node $k+1$ is connected to $G_k$ in
   $\union_{\tau=t}^{+\infty} \GG(\tau)$, we aim to find a contradiction
   with the statement
   \begin{equation}
     \label{eq:absurb}
     \phi(\supind{B}{1}\union\cdots \union \supind{B}{k}
     \union \supind{B}{k+1} )
     > \bar{\phi}.
   \end{equation}
   Plugging the induction assumption into equation~\eqref{eq:absurb}, we
   have
   \begin{equation}
     \label{eq:n+1greater-n}
     \phi(\supind{B}{1}\union\cdots \union \supind{B}{k} \union
     \supind{B}{k+1} ) >
     \phi(\supind{B}{1}\union\cdots \union \supind{B}{k}).
   \end{equation}
   From Lemma~\ref{lemma:locality2} with $F=\supind{B}{1}\union\cdots
   \union \supind{B}{k}$ and $G=\supind{B}{k+1}$ and noting
   equation~\eqref{eq:n+1greater-n}, we conclude that there exists
   $g\in\supind{B}{k+1}$ such that
   \begin{equation}
     \label{eq:three}
     \phi(\supind{B}{1}\union\cdots \union
     \supind{B}{k}\union\{g\}) > \phi(\supind{B}{1}\union\cdots \union
     \supind{B}{k}).
   \end{equation}
   Next, select a node $p\in\until{k}$ such that either $(p,k+1)$ or
   $(k+1,p)$ is a directed edge of $\union_{\tau=t}^{+\infty} \GG(\tau)$
   and note that $\supind{B}{p} \subset \supind{B}{1}\union\cdots \union
   \supind{B}{k}$ and $\phi(\supind{B}{p}) = \phi(
   \supind{B}{1}\union\cdots \union \supind{B}{k})$ by the induction
   assumption. From these two facts together with
   equation~\eqref{eq:three}, the locality property implies that
   \begin{equation}
     \label{eq:four}
     \phi(\supind{B}{p}\union\{g\}) >  \phi(\supind{B}{p}).
   \end{equation}
   Finally, the contradiction follows by noting:
   \begin{align*}
     \label{eq:six}
     \phi(\supind{B}{p}) &\overset{\text{bases have converged}}{=}
     \phi(\supind{B}{p}\union \supind{B}{k+1})
     \overset{\text{monotonicity}}{\geq}
     \phi(\supind{B}{p}\union \{g\})
     \overset{\text{by equation}~\eqref{eq:four}}{>} \phi(\supind{B}{p}).
   \end{align*}
   This concludes the proof of equation~\eqref{eq:globalopt}.

   Third and final, because no constraint in $\{h_1,\ldots,h_n\}$ violates
   the set $\supind{B}{1}\union\cdots\union\supind{B}{n}$ and because
   $\supind{B}{1}\union\cdots\union\supind{B}{n} \subset H$,
   Lemma~\ref{lemma:locality2} and equation~\eqref{eq:globalopt} together
   imply
   \begin{equation*}
     \bar{\phi} = \phi(\supind{B}{1}\union\cdots\union\supind{B}{n})
     =\phi(H).
   \end{equation*}
   This equality proves that in a finite number of rounds the candidate
   basis at each node is a solution to $(H,\phi)$ and, therefore, this
   concludes our proof of fact~\ref{fact:convergence}.  The proof of
   fact~\ref{fact:uniqueness} is straightforward.
 \end{proof}

\begin{remark}\textbf{\textup{(Correctness of multi-round and cycling constraints consensus)}}
  Correctness of the other two versions of the constraints consensus
  algorithm may be established along the same lines. For example, it is
  immediate to establish that the basis at each node reaches a constant
  value in finite time.  It is easy to show that this constant value is the
  solution of the abstract optimization program for the multi-round algorithm
  over a time-dependent graph.  For the cycling algorithm over a
  time-independent graph we note that the procedure used to process the
  incoming data is equivalent to considering a time-dependent graph whose
  edges change with that law.  \oprocend
\end{remark}

\begin{theorem}[Halting condition]
  \label{thm:halting-condition}
  Consider a network described by a time-independent strongly-connected
  digraph $\GG$ implementing a constraints consensus algorithm in which a
  network node initializes its candidate basis to a constraint set with
  finite value. Each processor can halt the algorithm execution if the
  value of its basis has not changed after $2\diam(\GG)+1$ communication
  rounds.
\end{theorem}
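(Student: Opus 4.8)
The plan is to verify that the local stopping rule never triggers prematurely: I will show that if node $i$'s basis value has remained constant over a window of $2\diam(\GG)+1$ rounds, then that value already equals $\phi(H)$. Since $\phi(H)$ is the maximal value any candidate basis can attain (every $\supind{B}{i}\subset H$, so monotonicity bounds $\phi(\supind{B}{i})\le\phi(H)$) and since the value sequence is non-decreasing by Theorem~\ref{thm:correctness-of-constraints-consensus}, reaching $\phi(H)$ means the value can never change again, so halting is sound.

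The starting point is a one-step propagation inequality. Writing $v_k(t):=\phi(\supind{B}{k}(t))$ and recalling that the state-transition rule recomputes $\supind{B}{k}(t+1)$ on a constraint set $\subscr{H}{tmp}$ that contains every in-neighbor's basis $\supind{B}{j}(t)$, the monotonicity axiom yields $v_k(t+1)\ge\max\{v_k(t),\ \max_{j\in\innbrs(k)}v_j(t)\}$. Now suppose $v_i(t)\equiv v$ on $[t_0,\,t_0+2\diam(\GG)+1]$. Chaining this inequality \emph{backward} along directed paths that terminate at $i$ (by strong connectivity every node reaches $i$ within $\diam(\GG)$ hops) forces $v_k(t)\le v$ for every node $k$ and every $t\le t_0+\diam(\GG)+1$; chaining it \emph{forward} along directed paths emanating from $i$ forces $v_k(t)\ge v$ for every $k$ as soon as $t\ge t_0+\diam(\GG)$. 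The constant $2\diam(\GG)+1$ is exactly what makes both bounds hold at the two consecutive instants $t_0+\diam(\GG)$ and $t_0+\diam(\GG)+1$, so at each of these instants every node has value precisely $v$: the network is in value-consensus for two consecutive rounds.

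With two-round value-consensus in hand, I would reduce the remaining argument to the machinery already developed for Theorem~\ref{thm:correctness-of-constraints-consensus}. Set $T^*:=t_0+\diam(\GG)$. For any directed edge $(a,b)$, comparing $v_b(T^*+1)=v$ with the update formula (whose argument contains $\supind{B}{a}(T^*)\union\supind{B}{b}(T^*)$) gives $\phi(\supind{B}{a}(T^*)\union\supind{B}{b}(T^*))=v$, i.e.\ no constraint of an upstream basis violates the downstream one. This is exactly the edge condition that feeds the weakly-connected induction in the proof of Theorem~\ref{thm:correctness-of-constraints-consensus}, and I would invoke that same induction to conclude $\phi\big(\union_{k=1}^{n}\supind{B}{k}(T^*)\big)=v$. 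Finally, since each node $\ell$ permanently re-examines its own constraint $h_\ell$, the equality $v_\ell(T^*+1)=v$ shows $h_\ell$ does not violate $\supind{B}{\ell}(T^*)$; the locality axiom (applied with $F=\supind{B}{\ell}(T^*)$ and $G=\union_k\supind{B}{k}(T^*)=:U$, both of value $v$) promotes this to: $h_\ell$ does not violate $U$. As this holds for every $\ell$, Lemma~\ref{lemma:locality2} gives $\phi(H)=\phi(U\union H)=\phi(U)=v$, so $v=\phi(H)$ as claimed.

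I expect the main obstacle to be the timing bookkeeping that makes the forward and backward cones overlap at two consecutive rounds — this is precisely what pins the constant to $2\diam(\GG)+1$ rather than $2\diam(\GG)$ or $2\diam(\GG)+2$ — together with checking that the edge condition is genuinely available at both $T^*$ and $T^*+1$, so that the Theorem~\ref{thm:correctness-of-constraints-consensus} induction and the concluding locality/re-examination step can be applied verbatim. The propagation inequalities and the reuse of the prior induction are then routine.
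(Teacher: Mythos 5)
Your proof is correct, and it is in fact more complete than the argument the paper itself gives. Both share the same core: the one-step inequality $\phi(\supind{B}{j}(t+1))\geq\phi(\supind{B}{i}(t))$ across each edge $(i,j)$, chained forward from $i$ and backward to $i$ to exploit the observation window. The paper uses this chaining only to conclude that every node $j$ attains the value $\phi(B)$ at time $\dist(i,j)$, and then asserts that the value can never change afterwards; as written, that last step is not fully justified, since a node whose value equals $\phi(B)$ at one instant could in principle still increase later, with the increase reaching $i$ only after the window has closed. You instead extract from the window \emph{two consecutive} instants $T^*$ and $T^*+1$ of network-wide value consensus --- and your bookkeeping correctly identifies $2\diam(\GG)+1$ rounds as exactly what makes the forward and backward cones overlap on two instants --- which yields the edge condition $\phi(\supind{B}{a}(T^*)\union\supind{B}{b}(T^*))=v$ for every edge. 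Feeding that into the weakly-connected induction from the proof of Theorem~\ref{thm:correctness-of-constraints-consensus} (which, as you note, uses only this static edge condition and not convergence of the bases), and then using constraint re-examination, locality and Lemma~\ref{lemma:locality2}, you obtain $v=\phi(H)$, which genuinely closes the argument because $\phi(H)$ upper-bounds every candidate basis value and the value sequences are non-decreasing. The only detail left implicit is that $v>-\infty$, needed to invoke locality; this follows from the finite-value initialization hypothesis and strong connectivity exactly as in fact~(i) of Theorem~\ref{thm:correctness-of-constraints-consensus}, since $T^*\geq\diam(\GG)$.
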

\begin{proof}
  For all $t\in\naturalzero$ and for every $(i,j)$ edge of $\GG$,
  \begin{equation}
    \label{eq:monotonicity-(i,j)-t_t+1}
    \phi(\supind{B}{i}(t)) \leq \phi(\supind{B}{j}(t+1)),
  \end{equation}
  because, by construction along the constraints consensus algorithm, the
  basis $\supind{B}{j}(t+1)$ is not violated by any constraint in the basis
  $\supind{B}{i}(t)$.  Assume that node $i$ satisfies $\supind{B}{i}(t) =
  B$ for all $t \in \fromto{t_0}{t_0+2\diam(\GG)}$, and pick any other node
  $j$. Without loss of generality, set $t_0=0$. Because of
  equation~\eqref{eq:monotonicity-(i,j)-t_t+1}, if $k_1\in\outnbrs(i)$,
  then $\phi(\supind{B}{k_1}(1)) \geq \phi(B)$ and, recursively, if
  $k_2\in\outnbrs(k_1)$, then $\phi(\supind{B}{k_2}(2)) \geq
  \phi(\supind{B}{k_1}(1)) \geq \phi(B)$.  Therefore, iterating this
  argument $\dist(i,j)$ times, the node $j$ satisfies
  $\phi(\supind{B}{j}(\dist(i,j))) \geq \phi(B)$.  Now, consider the
  out-neighbors of node $j$.  For every $k_3\in\outnbrs(j)$, it must hold
  that $\phi(\supind{B}{k_3}(\dist(i,j)+1)) \geq
  \phi(\supind{B}{j}(\dist(i,j)))$.  Iterating this argument $\dist(j,i)$
  times, the node $i$ satisfies $\phi(\supind{B}{i}(\dist(i,j)+\dist(j,i)))
  \geq \phi(\supind{B}{j}(\dist(i,j)))$.  In summary, because
  $\dist(i,j)+\dist(j,i)\leq 2\diam(\GG)$, we know that
  $\phi(\supind{B}{i}(\dist(i,j)+\dist(j,i)))=\phi(B)$ and, in turn, that
  \begin{equation*}
    \phi(B) \geq \phi(\supind{B}{j}(\dist(i,j))) \geq \phi(B).
  \end{equation*}
  Thus, if basis $i$ does not change for $2\diam(\GG)+1$ time instants,
  then its value will never change afterwards because all bases
  $\supind{B}{j}$, for $j\in\until{n}$, have cost equal to $\phi(B)$ at
  least as early as time equal to $\diam(\GG)+1$. Therefore, node $i$ has
  sufficient information to stop the algorithm after a $2\diam(\GG)+1$
  duration without value improvement.
\end{proof}

Next, let us state some simple memory complexity bounds for the three
algorithms.  Assume that $\delta$ is the combinatorial dimension of the
abstract program $(H, \phi)$ and call a memory unit is the amount of
memory required to store a constraint in $H$.  Each node $i$ of the
network requires $1+\delta(1+\card(\innbrs(i)))\in O(n)$ memory units in
order to implement the constraints consensus algorithm and its multi-round
variation, and $1+\delta(1+D) \in O(1)$ in order to implement the cycling
constraints consensus algorithm.

We conclude this section with some incomplete results about the
\emph{completion time} of the constraints consensus algorithm, i.e., the
number of communication rounds required for a solution of the distributed
abstract program, and about the \emph{time complexity}, i.e., the
functional dependence of the completion time on the number of
agents. First, it is straightforward to show that there exist distributed
abstract programs of dimension $n$ for which the time complexity can
be lower bounded by $\Omega(n)$. Indeed, it takes order $n$ communication
rounds to propagate information across a path graph of order $n$.  On the
other hand, it is also easy to provide a \emph{loose} upper bound by noting
that (i) the number of possible distinct bases for an abstract optimization
program with $n$ constraints and combinatorial dimension $\delta$ is upper
bounded by $n^\delta$, and (ii) at each communication round at least one
node in the network increases its basis. Therefore, the worst-case time
complexity is upper bounded by $n^{\delta+1}$.  It is our conjecture that
the average time complexity of the constraints consensus algorithm is much
better than the loose analysis we are able to provide so far.

\begin{conjecture}[Linear average time complexity]
  \label{conjecture:linear}
  Over the set of time-independent strongly-connected digraphs and distributed
  abstract programs, the average time complexity of the constraints
  consensus algorithm belongs to $O(\diam(\GG))$. \oprocend
\end{conjecture}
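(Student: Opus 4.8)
The plan is to decompose the completion time into a \emph{discovery} phase and a \emph{propagation} phase, to dispose of the propagation phase exactly, and then to concentrate all of the probabilistic effort on bounding the discovery phase. Define the discovery time $\subscr{T}{disc} = \min\{t\in\naturalzero : \phi(\supind{B}{i}(t)) = \phi(H) \text{ for some } i\in\until{n}\}$, i.e.\ the first round at which some node attains the global optimal value. The propagation phase is then controlled for free: for a time-independent strongly connected digraph, the value-propagation inequality $\phi(\supind{B}{i}(t))\le \phi(\supind{B}{j}(t+1))$ for every edge $(i,j)$, established in the proof of Theorem~\ref{thm:halting-condition}, combined with the a~priori bound $\phi(\supind{B}{j}(t))\le\phi(H)$ from monotonicity, shows that once one node reaches $\phi(H)$ the value $\phi(H)$ travels along every shortest path at one hop per round. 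Hence every node attains value $\phi(H)$ within $\diam(\GG)$ further rounds, and at that moment its candidate basis, being a minimal subset of $H$ of value $\phi(H)$, is an actual solution of $(H,\phi)$.

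This yields the deterministic bound $\subscr{T}{disc}+\diam(\GG)$ on the completion time, so it suffices to prove $\mathbb{E}[\subscr{T}{disc}]\in O(\diam(\GG))$, the expectation taken over the ensemble of digraphs and abstract programs. To attack $\subscr{T}{disc}$, I would track the (at most $\delta$) constraints $g_1,\dots,g_m$ forming the solution basis $B_H$, located at source nodes $v_1,\dots,v_m$; note that each $g_k$ is permanently available as the re-examined constraint $h_{v_k}$ at its own node. I would fix an aggregation vertex $w$ minimizing $\max_k \dist(v_k,w)$ and argue that, in the idealized situation where every $g_k$ is relayed unobstructed toward $w$ along a shortest path, node $w$ accumulates all of $B_H$ — and hence attains $\phi(H)$ — within $\max_k\dist(v_k,w)\le\diam(\GG)$ rounds. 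The quantity $\subscr{\phi}{max}(t)=\max_i\phi(\supind{B}{i}(t))$ is monotone non-decreasing and supplies the skeleton of a progress measure; the goal is to show it climbs to $\phi(H)$ in $O(\diam(\GG))$ increments on average.

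The main obstacle is exactly the failure of persistency emphasized after Lemma~\ref{lemma:all-is-simple-if-persistent}: a globally optimal constraint $g_k$ may be dropped from an intermediate node's basis because it is not \emph{locally} binding, so flooding-type relaying is not guaranteed and the idealized aggregation argument above is not directly valid. Consequently there is no obvious combinatorial quantity that both increases by a controlled amount each round and reaches its maximum after only $O(\diam(\GG))$ increments: a~priori the scalar value $\subscr{\phi}{max}$ may pass through as many as the $O(n^\delta)$ distinct attainable levels, which is precisely what produces the loose $n^{\delta+1}$ worst-case bound. Proving the conjecture thus requires showing that \emph{on average} the relevant constraints re-aggregate despite repeated re-examination — for instance via a coupling of the value sequence with an unobstructed token-relay process, together with a concentration argument bounding the probability and the cost of ``dropping'' events.

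Given the sharpness of this obstacle, I would first establish the claim for structured ensembles in which the geometry decouples, before attempting the general statement. The line graph is the natural starting point: there $\diam(\GG)=n-1$ matches the $\Omega(n)$ lower bound, and a direct two-front aggregation argument (each relevant constraint sweeping monotonically toward a meeting point) should succeed because a node on a path has a single relevant upstream neighbour, limiting the ways a constraint can be lost. I would then treat random geometric and Erd\H{o}s--R\'enyi graphs, where diameter concentration and the weak dependence between distant constraints should make the coupling-and-concentration programme tractable, and only afterwards attempt to lift the analysis to arbitrary strongly connected digraphs.
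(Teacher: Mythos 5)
The statement you are addressing is, in the paper, a \emph{conjecture}: the authors explicitly do not prove it. Their own support consists of (i) the loose deterministic bounds stated just before Conjecture~\ref{conjecture:linear} (an $\Omega(n)$ lower bound on path graphs and an $n^{\delta+1}$ worst-case upper bound obtained by counting distinct basis values) and (ii) the Monte Carlo probability-estimation campaign of Section~\ref{sec:computations}, which only yields statements of the form ``with $99\%$ confidence, with probability at least $99\%$ the completion time is at most $4(n-1)$'' for specific stochastic ensembles of LPs and graphs. There is therefore no proof in the paper against which to match your argument.

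Your proposal is likewise not a proof, and to your credit you say so. The parts you do establish are sound and consistent with the paper's machinery: the decomposition of completion time into a discovery phase plus a propagation phase, and the $\diam(\GG)$ bound on propagation via the inequality $\phi(\supind{B}{i}(t))\le\phi(\supind{B}{j}(t+1))$ for each edge $(i,j)$ (this is exactly equation~\eqref{eq:monotonicity-(i,j)-t_t+1} in the proof of Theorem~\ref{thm:halting-condition}) combined with $\phi(\supind{B}{j}(t))\le\phi(H)$ from monotonicity. The genuine gap is precisely the one you name and then do not close: because persistency fails (Figure~\ref{fig:LP_counter_example}), a globally optimal constraint can be dropped by an intermediate node, the ``unobstructed token-relay'' coupling is unavailable, and the only progress measure you exhibit --- $\max_i\phi(\supind{B}{i}(t))$ --- can a priori traverse $O(n^\delta)$ levels, which is exactly the source of the paper's loose $n^{\delta+1}$ bound. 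The coupling-and-concentration argument that would give $\mathbb{E}[\subscr{T}{disc}]\in O(\diam(\GG))$ is only gestured at, not constructed, even for the line-graph ensemble you single out as the easiest case. As a research plan your write-up goes further than the paper, which offers no analytic attack on the average case at all, but it does not prove the conjecture; the conjecture remains open both in the paper and in your proposal.
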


\section{Monte Carlo analysis of the time complexity of constraints
  consensus}
\label{sec:computations}
This section presents a simulation-based analysis of the time complexity of
the constraints consensus algorithm for stochastically generated
distributed abstract programs.  To define a numerical experiment, i.e., a
stochastically-generated distributed abstract program, we need to specify
(1) the communication graph, (2) the abstract optimization problem and (3)
various parameters describing a nominal set of problems and some variations
of interest.  We discuss these three degrees of freedom in the next three
subsections and perform two sets of Monte Carlo analysis in the two
subsequent subsections.

\subsection{Communication graph models}
We consider time-independent undirected communication graphs generated
according to one of the following three graph models. The first model is
the \emph{line graph}. It has bounded node degree and the largest
diameter. Then we consider two random graphs models, namely the well-known
\emph{Erd\H{o}s-R\`enyi graph} and \emph{random geometric graph}. In the
Erd\H{o}s-R\`enyi graph an edge is set between each pair of nodes with
equal probability $p$ independently of the other edges. A known
result~\cite{RA-ALB:02} is that the average degree of the nodes is $p
n$. Also, if $p = (1 + \epsilon) \log(n) / n$, $\epsilon > 0$, the graph is
almost surely connected and the average diameter of the graph is $\log(n) /
\log(pn)$. Indeed, we use the probability $(1 + \epsilon) \log(n) / n$ to
generate the graph. This gives an average node degree that is unbounded as
$n$ grows, but the growth is logarithmic and so the local computations are
still tractable. A random geometric graph in a bounded region is generated
by (i) placing nodes at locations that are drawn at random uniformly and
independently on the region and (ii) connecting two vertices if and only if
the distance between them is less than or equal to a threshold $r>0$. We
generate random geometric graphs in a unit-length square of $\real^2$.  To
obtain a connected graph we set the radius $r$ to the minimum value that
guarantees connectivity.

\subsection{Linear programs models}
In our experiments we consider stochastically-generated linear programs
according to well-known models proposed in the literature. A detailed
survey on stochastic models for linear programs and their use to study the
performance of the simplex method is \cite{RS:87}.  We consider standard
LPs in $d$-dimensions with $n$ constraints of the form
\begin{equation*}
  \begin{split}
    \min  &\quad c^T x\\
    \subject &\quad A x \leq b,
  \end{split}
\end{equation*}
where $A\in\real^{n\times d}$, $b\in\real^n$ and $c\in\real^d$ are
generated according to the following stochastic models.

\emph{Model A.} In this model the elements $A_{ij}$ and $c_j$ are
independently drawn from the standard Gaussian distribution. The vector $b$
is defined by $b_i = \big(\sum_{j=1}^d A^2_{ij}\big)^{1/2}$,
$i\in\until{n}$.  This corresponds to generating hyperplanes (corresponding
to the constraints) whose normal vectors are uniformly distributed on the
unit sphere and that are at unit distance from the origin. The LP problems
generated according to this model are always feasible. This model was
originally proposed by~\cite{JRD-DGK-JWT:77} and is a special case of a
class of models introduced by~\cite{KHB:77} see \cite{RS:87} for
details.\footnote{Our Model A is the one indicated as Model N
  in~\cite{RS:87}.  A slightly different model, called Model O, features
  hyperplanes at random distance from the origin. Numerical experiments
  show that, for large $n/d$, Model N problems require more iterations on
  average than Model O ones. Indeed, for fixed $d$ and increasing $n$, all
  constraints in Model N problems are relevant, whereas many constraints in
  Model O problems are not.}

\emph{Model B.} In this model the vector $c$ is obtained as $c = A^T
\hat{c}$. The vector $(b, \hat{c}) \in \real^{2n}$ is uniformly randomly
generated in $[0,1]^{2n}$ and $A$ is a standard Gaussian random matrix
independent of $(b, \hat{c})$.  The LP problems generated according to this
model are always feasible. This LP model, with a more general stochastic
model for $(b, \hat{c})$, was proposed by Todd in \cite{MJT:91} (where it
is the model indicated as ``Model 1'' in a collection of three).

\subsection{Nominal problems and variations}
First, as \emph{nominal set of problems} we consider a set of distributed
abstract programs with the following characteristics: $d=4$,
$n\in\{20,\dots,240\}$, the graphs are equal to the line graphs of
dimension $n$, and the linear problems are generated from Model A.

Second, as variations of the nominal set of problems, we generate LPs of
dimension $d\in\{2,3,4,5\}$ with a number of constraints
$n\in\{20,40,\dots,240\}$. For each value of $d$ we generate a graph
according to one of the three graph models and an LP according to one of
the two LP models. For each configuration (dimension, number of
constraints, graph model, and LP model), we generate different problems, we
solve each problem with the constraints consensus algorithm, and we store
worst-case and average completion time.

Results for the nominal set of problems and for its variations are given in
the next sections.

\subsection{Time complexity results for nominal problems}
For the nominal set of problems, we study the time complexity via the
Student $t$-test and via Monte Carlo probability estimation.

For each value of $n$, we perform a Student's $t$-test with the null
hypothesis being that the average completion time divided by the graph
diameter is greater than $1.5$ -- against the alternate hypothesis that the
same ratio is less than or equal to that (at the $95\%$ confidence
level). (Note that the diameter is $(n-1)$.) The results for
$n\in\{200,220,240\}$ are shown in Table~\ref{tab:ttest}. The tests show
that we can reject the null hypothesis.
In Figure~\ref{fig:Nst_vs_n_line} we show the linear dependence of the
completion time with respect to the number of agents (and therefore with
respect to the diameter) and provide the corresponding $95\%$ confidence
intervals.

\begin{table}[h]
  \begin{center}
    \setlength{\tabcolsep}{4pt}
    \begin{tabular}[c]{cccccc}
      \parbox{.16\linewidth}{number of constraints} &
      \parbox{.16\linewidth}{average completion time/diam} &
      \parbox{.15\linewidth}{standard deviation} &
      df &
      $t$-value &
      $p$-value \\[.75em] \hline
      \\[-.75em]
      $240$ & $1.21$ & $0.36$ & $99$ & $-7.73$ & $4.3 \times 10^{-12}$\\
      $220$ & $1.16$ & $0.31$ & $99$ & $-10.90$ & $6.0 \times 10^{-19}$\\
      $200$ & $1.27$ & $0.36$ & $99$ & $-6.49$ & $1.7 \times 10^{-9}$
    \end{tabular}
    \caption{Student's t-test results for demonstrating the linear
      dependence of the completion time with respect to the
      diameter. Problem: Graph = line graph, LP model = Model A, $d=4$,
      $n\in\{200,220,240\}$, $\subscr{N}{run}=100$, null hypothesis:
      average completion time $/ n > 1.5$.}
\label{tab:ttest}
\end{center}
\end{table}

\begin{figure}[h]
\begin{center}
  \includegraphics[width=.4\linewidth]{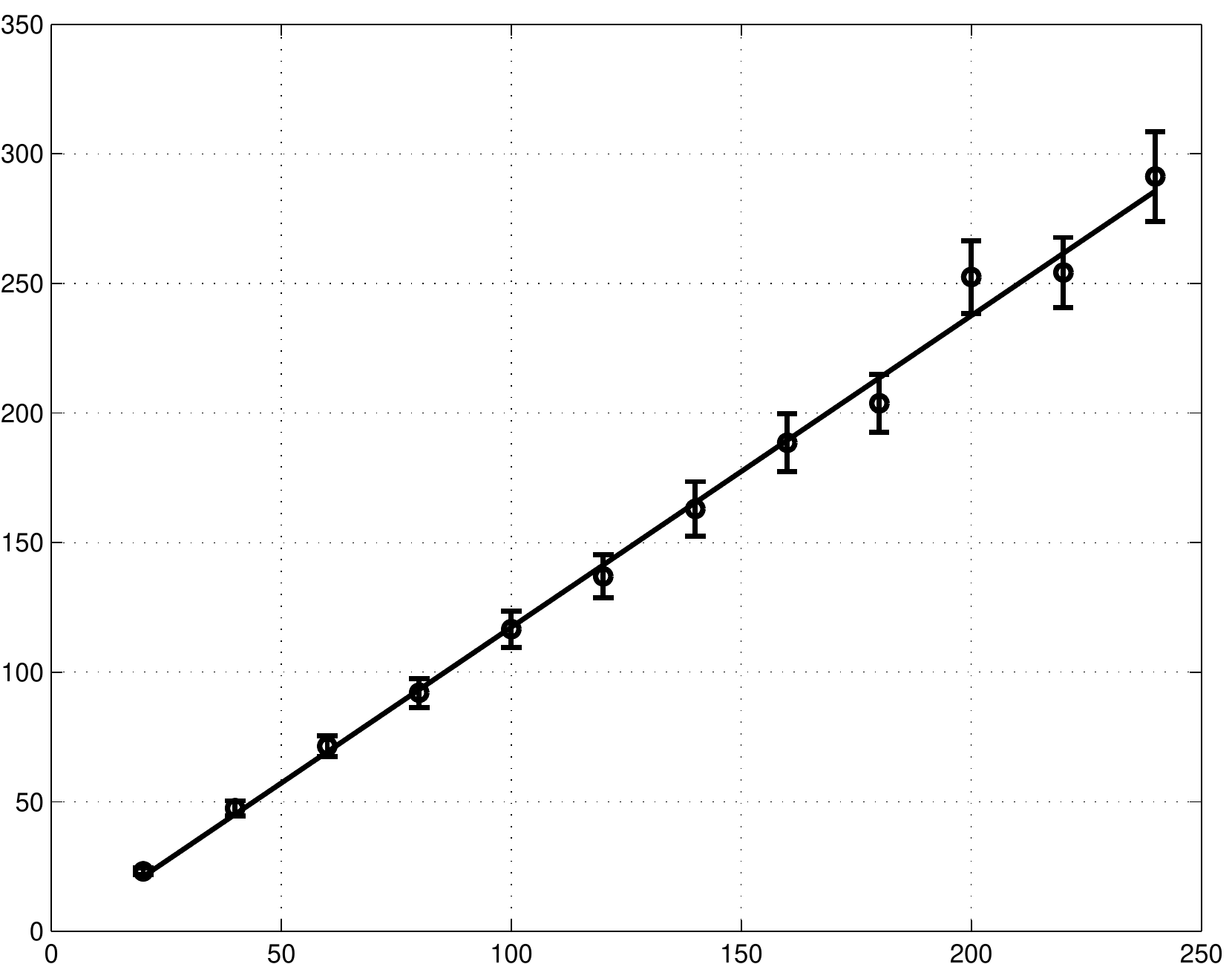}%
  \caption{Average completion time for increasing number of constraints
    $n$. Specifically: Graph = line graph, LP model = Model A, $d=4$,
    $n\in\{20,\dots,240\}$, $\subscr{N}{run}=100$, performance =
    average. The solid line is the least-square interpolation of the
    average completion times.}
  \label{fig:Nst_vs_n_line}
\end{center}
\end{figure}

Next, we aim to upper bound the worst-case completion time. To do so we use
a Monte Carlo probability estimation method, that we review
from~\cite{RT-GC-FD:05}.

\begin{remark}[Probability estimation via Monte
  Carlo]\label{rem:probability-estimation-via-montecarlo}
  We aim to estimate the probability that a random variable is less than or
  equal to a given threshold. Let $Q$ be a compact set and $\Delta$ be a
  random variable taking values in $Q$. Given a scalar threshold $\gamma$,
  define the probability $p(\gamma) = \text{Pr}\{J(\Delta)\leq \gamma\}$,
  where $\map{J}{Q}{\real}$ is a given measurable performance function. We
  estimate $p(\gamma)$ as follows. First, we generate $N\in\natural$
  independent identically distributed random samples $\Delta^{(1)}, \ldots,
  \Delta^{(N)}$. Second, we define the indicator function
  $\map{\mathbb{I}_{J,\gamma}}{Q}{\{0,1\}}$ by
  $\mathbb{I}_{J,\gamma}(\Delta) = 1$ if $J(\Delta)\leq\gamma$, and $0$
  otherwise. Third and final, we compute the \emph{empirical probability}
  as
  \begin{equation*}
    \hat{p}_N(\gamma) = \frac{1}{N}\sum_{i=1}^N \mathbb{I}_{J,\gamma}(\Delta^{(i)}).
  \end{equation*}
  Next, we adopt the Chernoff bound in order to provide a bound on the
  number of random samples required for a certain level of accuracy on the
  probability estimate.  For any accuracy $\epsilon \in (0, 1)$ and
  confidence level $1-\eta \in (0, 1)$, we know that $\|\hat{p}_N(\gamma)
  - p(\gamma)\| < \epsilon$ with probability greater than $1-\eta$ if
  \begin{equation}
    \label{bound:Chernoff}
    N \geq \frac{1}{2\epsilon^2}\log\frac{2}{\eta}.
  \end{equation}
  For $\epsilon = \eta = 0.01$, the Chernoff bound~\eqref{bound:Chernoff}
  is satisfied by $N = 27000$ samples.  \oprocend
\end{remark}

Adopting the same notation as in
Remark~\ref{rem:probability-estimation-via-montecarlo}, here is our setup:
First, the random variable of interest is a collection of $n$ unit-length
vectors (i.e., our random variable takes values in the compact space
$\setdef{x\in\real^4}{\|x\|=1}^n$).  Second, the function $J$ is the
completion time of the constraints consensus algorithm in solving a nominal
problem with constraints determined by $\Delta$.  Third, we want to
estimate the probability that, for $n\in\{40,60,80\}$, the completion time
is less than or equal to $4$ times the diameter of the chain graph of
dimension $n$.  In order to achieve an accuracy $0.01$ with confidence
level $99\%$, we run $\subscr{N}{run}=27000$ experiments for each value of
$n$ and we compute the maximum completion time in each case.  The
experiments show that for each $n$ the worst-case completion time is less
than $3.4$ times the graph diameter.  Therefore, we have established the
following statement.
\begin{quote}
  With $99\%$ confidence level, there is at least $99\%$ probability that a
  nominal problem ($d=4$, graph = line graph, LP model = Model A) with
  number of constraints $n\in\{40,60,80\}$ is solved by the constraints
  consensus algorithm in time upper bounded by $4(n-1)$.
\end{quote}

\subsection{Time complexity results for variations of the nominal problems}
Next we perform a comparison among different graph models, LP models and LP
dimensions. In order to compare the performance of different graphs we
consider problems with: Graph $\in$ \{line graph, Erd\H{o}s-R\`enyi graph,
random geometric graph\}, LP model = Model A, $d=4$,
$n\in\{20,\dots,240\}$, $\subscr{N}{run}=100$. We compute the average
completion time to diameter ratio for increasing values of $n$. The results
with the $95\%$ confidence interval are shown in
Figure~\ref{fig:Nst_vs_n_graph}.

\begin{figure}[h]
\begin{center}
  \includegraphics[width=.4\linewidth]{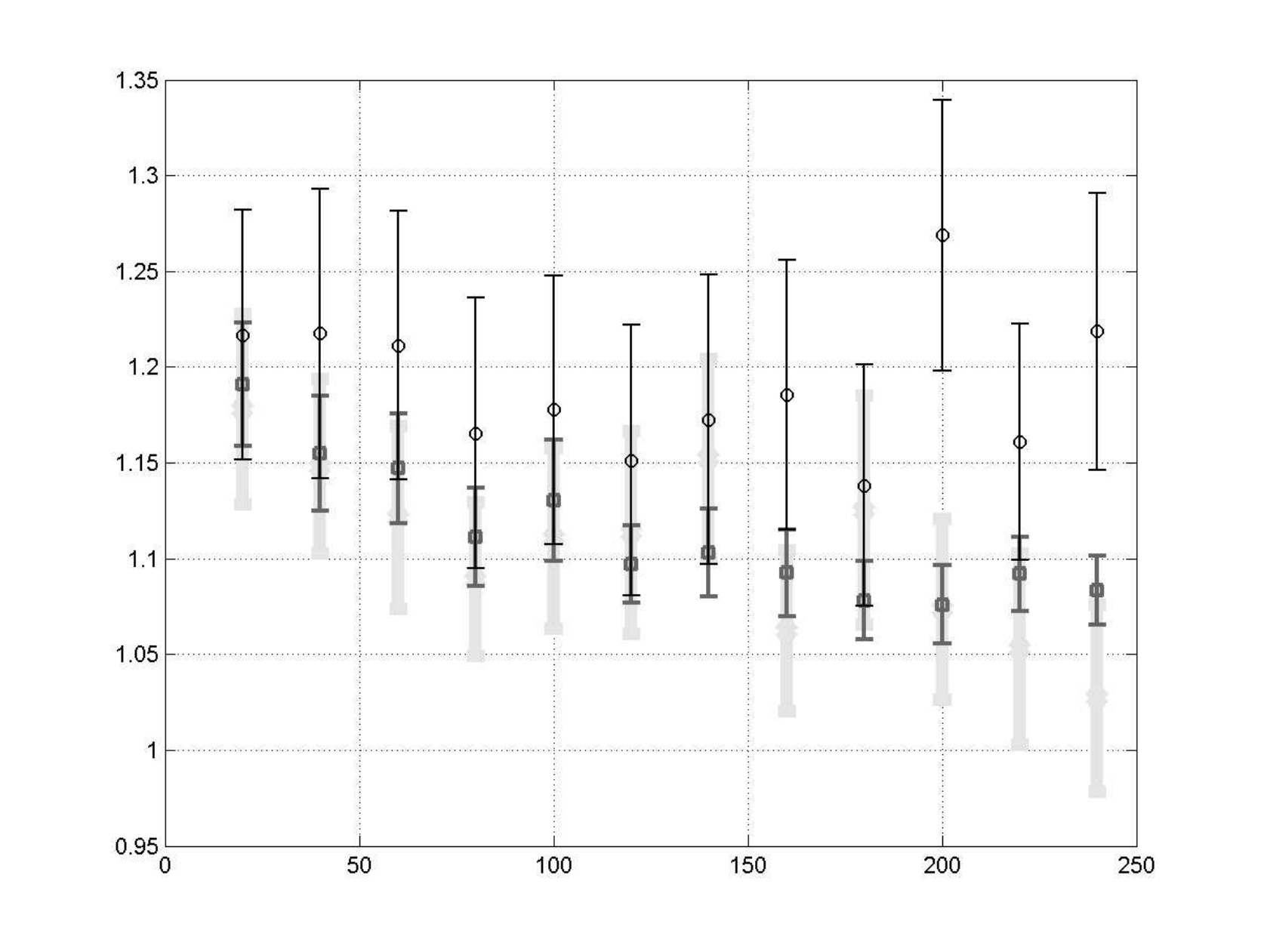}%
  \caption{Comparing completion time to diameter ratio for distinct graphs
    and for increasing number of constraints $n$. Specifically: Graph $\in$
    \{line graph (circle), Erd\H{o}s-R\`enyi graph (square), random
    geometric graph (diamond)\}, LP model = Model A, $d=4$,
    $n\in\{20,\dots,240\}$, $\subscr{N}{run}=100$, performance = average.}.
  \label{fig:Nst_vs_n_graph}
\end{center}
\end{figure}

To compare the performance for different
LP models, we consider problems with: Graph = line graph, LP model $\in$
\{Model A, Model B\}, $d=4$, $n\in\{20,\dots,240\}$,
$\subscr{N}{run}=100$. The results with the $95\%$ confidence interval are
shown in Figure~\ref{fig:Nst_vs_n_line_LP}.

\begin{figure}[h]
\begin{center}
  \includegraphics[width=.4\linewidth]{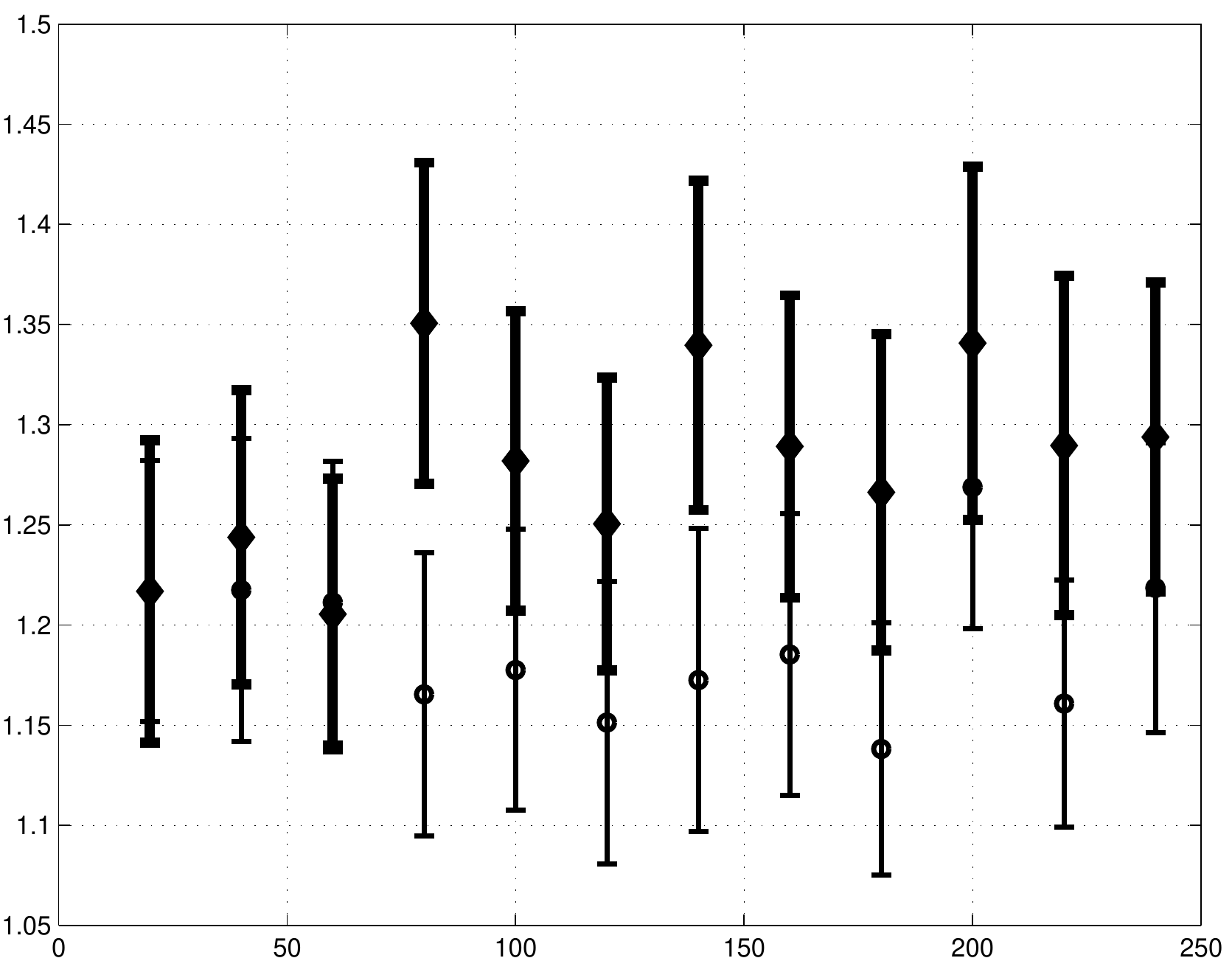}%
  \caption{Comparing completion time to diameter ratio for distinct LP
    models and for increasing number of constraints $n$. Specifically:
    Graph = line graph, LP model $\in$ \{Model A (circle), Model B
    (diamond)\} $d = 4$, $n\in\{20,\dots,240\}$, $\subscr{N}{run}=100$,
    performance = average.}.
  \label{fig:Nst_vs_n_line_LP}
\end{center}
\end{figure}

Next, to compare the performance for different dimensions $d$, we consider
problems with: Graph = line graph, LP model = Model A, $d\in\{2,3,4,5\}$,
$n\in\{20,\dots,240\}$, $\subscr{N}{run}=100$. The results with the $95\%$
confidence interval are shown in Figure~\ref{fig:Nst_vs_n_line_dim}.
The comparisons show that, the linear dependence of the completion time
with respect to the number of constraints is not affected by the graph
topology, the LP model and the dimension $d$. As regards the dimension $d$,
as expected, for fixed $n$ the average completion time grows with the
dimension. Also, the growth appears to be linear for $d\geq 3$ (for $d=2$
the algorithm seems to perform much better). In
Figure~\ref{fig:NstDiam_vs_d_line} we plot the least square value of the
completion time to diameter ratio over the number of agents $n$ versus the
dimension $d$.

\begin{figure}[h]
\begin{center}
  \includegraphics[width=.4\linewidth]{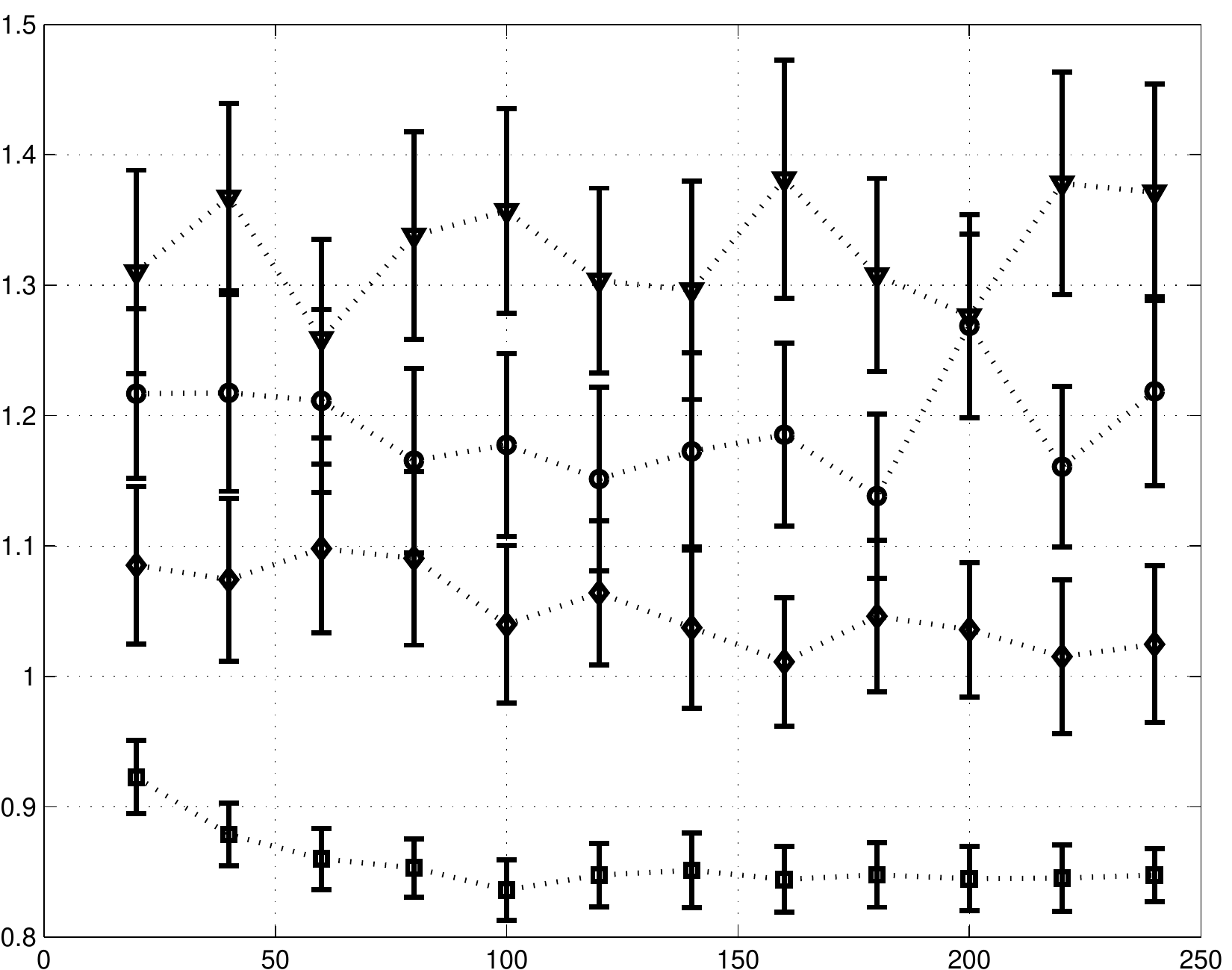}%
  \caption{Comparing completion time to diameter ratio for distinct problem
    dimensions $d$ and for increasing number of constraints
    $n$. Specifically: Graph = line graph, LP model = Model A, $d\in\{2
    (square),3 (diamond),4 (circle),5 (triangle)\}$,
    $n\in\{20,\dots,240\}$, $\subscr{N}{run}=100$, performance = average.}.
  \label{fig:Nst_vs_n_line_dim}
\end{center}
\end{figure}

\begin{figure}[h]
\begin{center}
  \includegraphics[width=.4\linewidth]{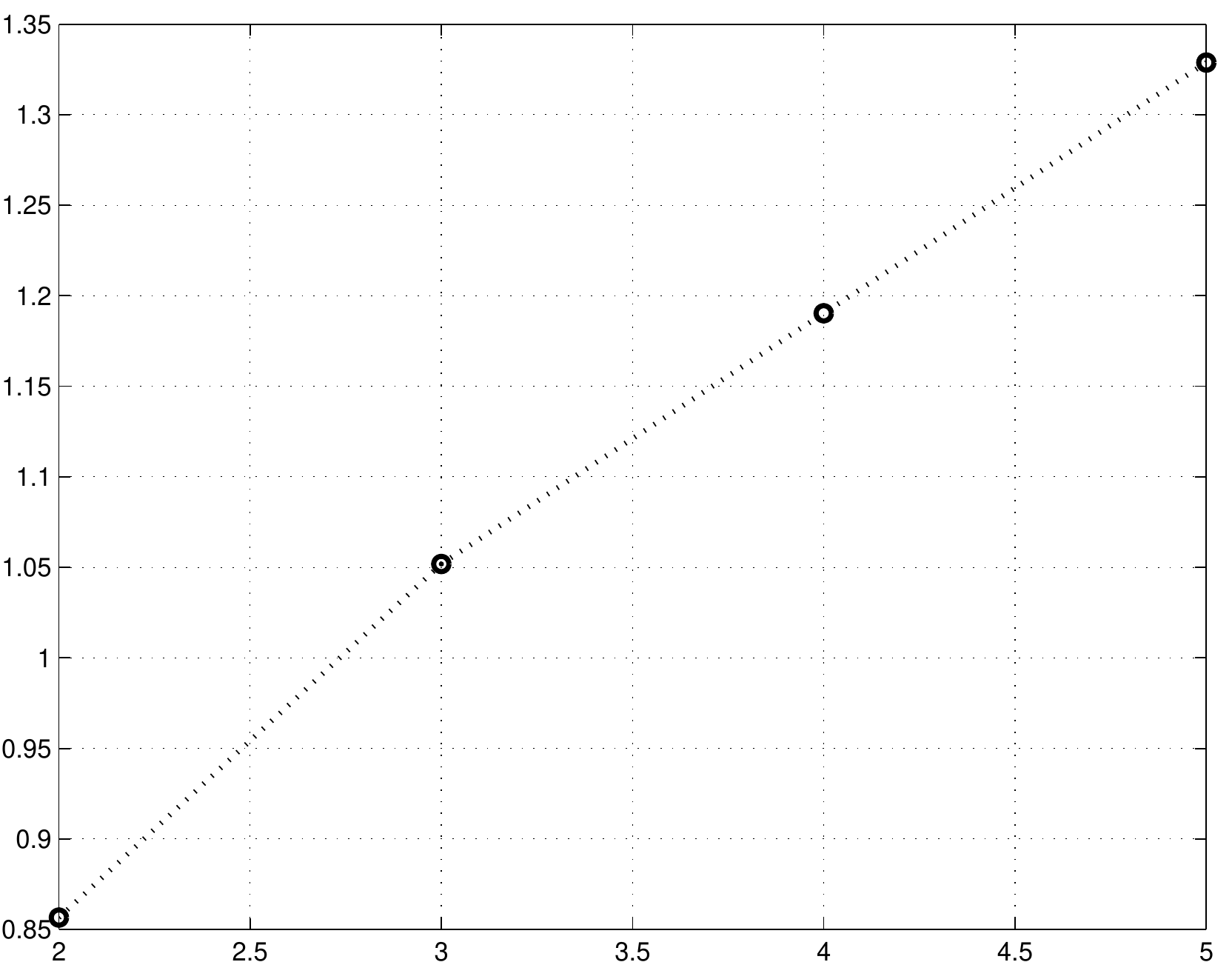}%
  \caption{Comparing completion time to diameter ratio (least square value
    in $n$) for increasing problem dimensions $d$. Specifically: Graph =
    line graph, LP model = Model A, $d\in\{2,3,4,5\}$,
    $n\in\{20,\dots,240\}$, $\subscr{N}{run}=100$, performance = average.}.
  \label{fig:NstDiam_vs_d_line}
\end{center}
\end{figure}

\section{Application to target localization in sensor networks}
\label{sec:target-localization}
In this section we discuss an application of distributed abstract
programming to sensor networks, namely a distributed solution for target
localization.  Essentially, we propose a distributed algorithm to
approximately compute the intersection of time-varying convex polytopes.

\subsection{Motion and sensing models}
We consider a target moving on the plane with unknown but bounded
velocity. The first-order dynamics is given by
\begin{equation}  \label{eq:target_dyn}
  p(t + 1) = p(t) + v(t),
\end{equation}
where $p(t)\in{[\xmin,\xmax]\times[\ymin,\ymax]}\subset\real^2$ is the
target position at time $t\in\naturalzero$ and $v(t)\in\real^2$ is unknown
but satisfies $\norm{v}{}\leq \subscr{v}{max}$ for given $\subscr{v}{max}$.

A set of sensors $\until{n}$ deployed in the environment measures the
target position. We assume that the measurement noise is unknown and such
that, at each time instant, each sensor $i$ measures a possibly-unbounded
region of the plane, $\supind{h}{i}(p(t)) \subset \real^2$, containing the
target. The set $M(p(t)) =
{\ensuremath{\operatorname{\cap}}}_{i\in\until{n}} \supind{h}{i}(p(t))$,
called the measurement set, provides the best estimate of the target
position based on instantaneous measures only.  An example scenario is
illustrated in Figure~\ref{fig:target-localization-scen}.  In what follows,
we make a critical assumption and a convenient one. First, we assume that
each measured region $\supind{h}{i}(p(t))$ is a possibly-unbounded convex
polygon. Second, for simplicity of notation, we assume that each measured
region $\supind{h}{i}(p(t))$ is a half-plane, so that the measurement set
$M(p(t))$ is a non-empty possibly-unbounded convex polygon equal with up to
$n$ edges.
\begin{figure}[h]
  \begin{center}
    \includegraphics[width=.36\linewidth]{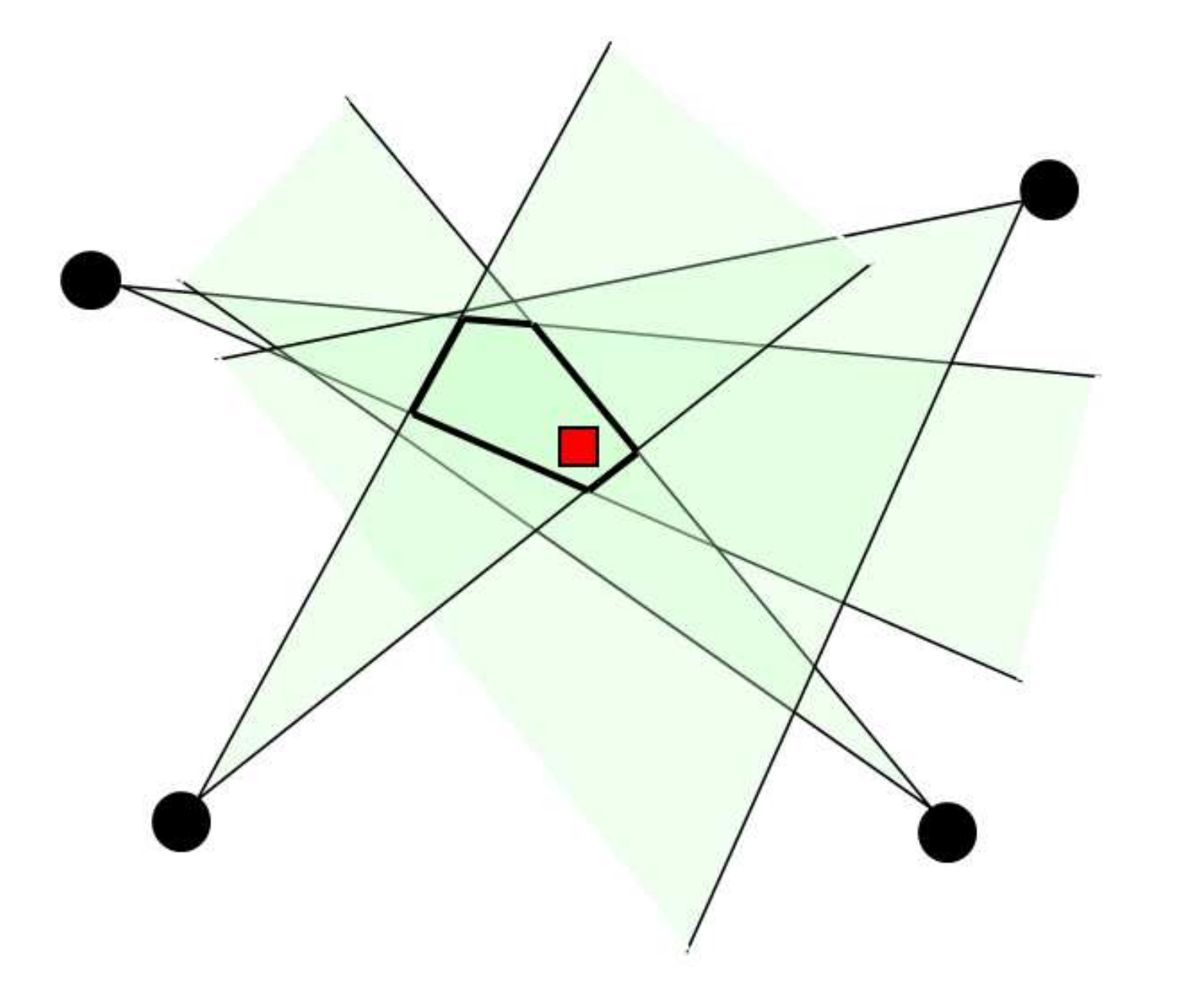}%
    \caption{The measurement set is the intersection of the $4$ sensor
      measurements.}
    \label{fig:target-localization-scen}
  \end{center}
\end{figure}

\subsection{Set-membership localization of a moving target}

\begin{problem}[Set-membership localization]
  Compute the smallest set $E(t)\subset\real^2$ that contains the target
  position $p(t)$ at $t\in\integernonnegative$ and that is consistent with
  the dynamic model~\eqref{eq:target_dyn} and the sensor measurements
  $\supind{h}{i}(p(t))$, $i\in\until{n}$. \oprocend
\end{problem}

We adopt the \emph{set-membership approach} described
in~\cite{AG-AV:01}.\footnote{In~\cite{AG-AV:01} a moving vehicle localizes
  itself by measuring landmarks at known positions. This
  ``self-localization'' scenario leads to a mathematical model closely
  related to the one we consider.  }  For $t\in\natural$, define the sets
$E(t|t-1)$ and $E(t|t)$ as the \emph{feasible position sets} containing all
the target positions at time $t$ that are compatible with the dynamics and
the available measurements up to time $t-1$ and $t$, respectively.  With
this notation, the recursion equations are:
\begin{subequations}
  \label{eq:set_memb_recurs}
  \begin{align}
    E(0|0)    &= M(p(0)),
    \label{eq:set_memb_recurs_initialization} \\
    E(t|t-1)  &=  E(t-1|t-1) + \cball{\subscr{v}{max}}{0},
    \label{eq:set_memb_recurs_time-update}
    \\
    E(t|t) &= E(t|t-1) \intersection M(p(t)) ,
    \label{eq:set_memb_recurs_measurement-update}
  \end{align}
\end{subequations}
where the sum set of two sets $A+B$ is defined as $\setdef{a+b}{a\in A
  \text{ and }b\in B}$.  Equation~\eqref{eq:set_memb_recurs_time-update} is
justified as follows: since the target speed satisfies
$\norm{v}{}\leq\subscr{v}{max}$, if the target is at position $p$ at time
$t$, then the target must be inside $\cball{\subscr{v}{max}\tau}{p}$ at
time $t+\tau$, for any positive $\tau$.
Equation~\eqref{eq:set_memb_recurs_measurement-update} is a direct
consequence of the definition of measurement set.
Equations~\eqref{eq:set_memb_recurs_initialization},
~\eqref{eq:set_memb_recurs_time-update},
and~\eqref{eq:set_memb_recurs_measurement-update} are referred to as
\emph{initialization}, \emph{time update} and \emph{measurement update},
respectively. The time and measurement updates are akin to prediction and
correction steps in Kalman filtering.

\begin{figure}[h]
\begin{center}
  \includegraphics[width=.65\linewidth]{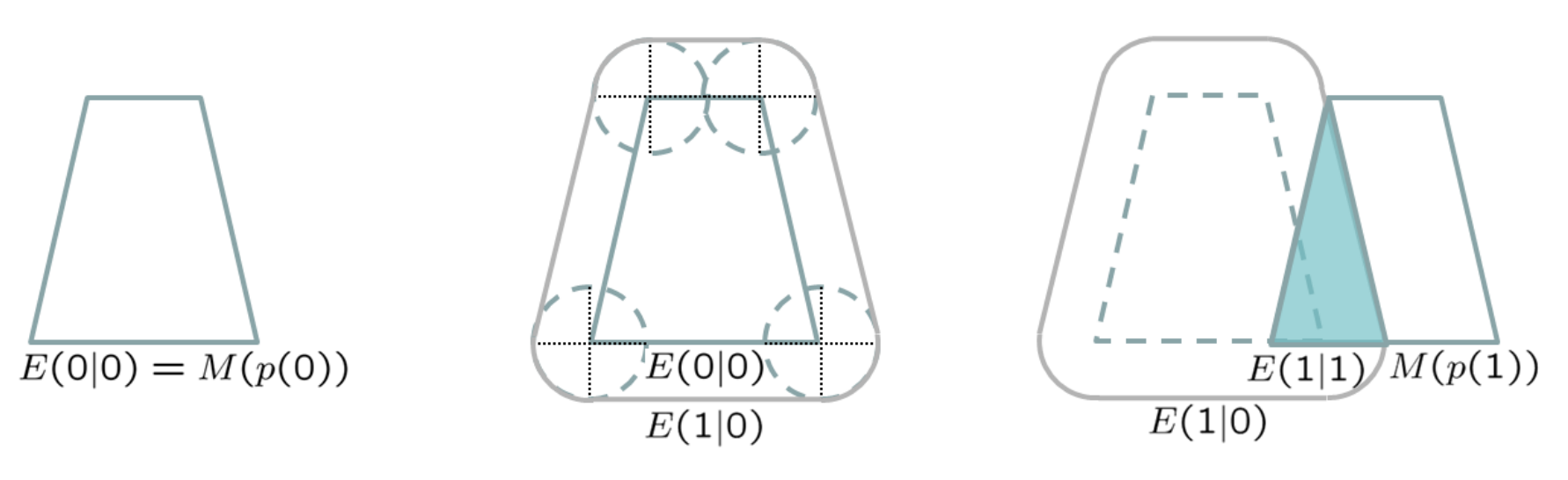}%
  \caption{Set-membership localization
    recursion~\eqref{eq:set_memb_recurs}: initialization, time update and
    measurement update.}
  \label{fig:set_membership_recursion}
\end{center}
\end{figure}

Running the recursion~\eqref{eq:set_memb_recurs} in its exact form is often
computationally intractable due to the large and increasing amount of data
required to describe the sets $E(t|t-1)$ and $E(t|t)$. To reduce the
computational complexity one typically over-approximates these sets with
bounded-complexity simple-structure sets, called \emph{approximating sets}.
For example, a common approximating set is the \emph{axis-aligned bounding
  box}, i.e., the smallest rectangle aligned with the reference axes
containing the set.  If $\Pi$ denotes the projection from the subsets of
$\real^2$ onto the collection of approximating sets, then the
recursion~\eqref{eq:set_memb_recurs} is rewritten as
\begin{subequations}
  \label{eq:set_memb_recurs_approx}
  \begin{align}
    E(0|0)    &= \Pi(M(p(0))),
    \label{eq:set_memb_recurs_approx-initialization}
    \\
    E(t|t-1)  &= \Pi( E(t-1|t-1) + \cball{\subscr{v}{max}}{0} ),
    \label{eq:set_memb_recurs_approx-time-update}
    \\
    E(t|t) &= \Pi( E(t|t-1) \intersection M(p(t)) ),
    \label{eq:set_memb_recurs_approx-measurement-update}
  \end{align}
\end{subequations}
where now the sets $E(t|t-1)$ and $E(t|t)$ are only approximations of the
feasible position sets.

\subsection{A centralized LP-based implementation: the eight half-planes
  algorithm}
In this section we propose a convenient choice of approximating sets for
set-membership localization and we discuss the corresponding time and
measurement updates. We begin with some preliminary notation.

We let $\HH_k$ be the set containing all possible collections of $k$
half-planes; in other words, an element of $\HH_k$ is a collection of $k$
half-planes.  Given an angle $\theta\in{[0,2\pi[}$ and a set of half-planes
$H = \{h_1, \ldots, h_k\}\in\HH_k$ with $h_i = \setdef{x\in\real^2}{a_i^T x
  \leq b_i, \norm{a_i}{}=1, b_i\in\real}$, define the linear program
$(H,\theta)$ by
\begin{equation} \label{eq:vtheta}
  \begin{split}
    \max   &\quad [\cos(\theta)\;\; \sin(\theta)] \cdot x\\
    \subject &\quad a_i^T x \leq b_i, \quad i\in\until{k}, \\
    & \quad \xmin\leq{x}\leq\xmax, \quad \ymin\leq{y}\leq\ymax.
  \end{split}
\end{equation}
As in Example~\ref{ex:abstract-framework-for-LP}, transcribe $(H,\theta)$
into an abstract optimization program $(H,\phi_\theta)$.  Recall that
$(H,\phi_\theta)$ is basis regular and has combinatorial dimension $2$, so
that its solution, i.e., the lexicographically-minimal minimum point of
$(H,\theta)$, is always a set of $2$ constraints, say
$\supind{h}{1}_{H,\theta}$ and $\supind{h}{2}_{H,\theta}$.  In other words,
the pair $\{\supind{h}{1}_{H,\theta},\supind{h}{2}_{H,\theta}\} \in \HH_2$
is computed as a function of an angle $\theta\in{[0,2\pi[}$ and of a
$k$-tuple $H=\{h_1,\ldots,h_k\}\in\HH_k$.

Now, as collection of approximating sets we consider the set $\HH_8$
containing the collections of $8$ half-planes. Note that the subset of
elements $\{h_1,\dots,h_8\} \in \HH_8$ such that $\intersection_{j=1}^8h_j$
is bounded is in bijection\footnote{Indeed, any convex polygon may be
  defined as the intersection of a finite number of half-planes; this
  definition is referred to as the H-representation of the convex polygon.}
with the set of convex polygons with at most $8$ edges.  Additionally, for
arbitrary $k$, we define the projection map $\map{\PiLP}{\HH_k}{\HH_8}$ as
follows: given $H\in\HH_k$, define $\PiLP(H)\in \HH_8$ to be the collection
of half-planes $\supind{h}{1}_{H,\theta}$ and $\supind{h}{2}_{H,\theta}$,
for $\theta \in \{0, \pi/2, \pi, 3\pi/2\}$. Note that our approximating set
$\PiLP(H)$ contains $H$ and is contained in the smallest axis-aligned
bounding box containing $H$; additionally, note that $\PiLP(H)$ contains
some possibly repeated half-planes because the same half-plane could be
part of the solution for distinct values of $\theta$.

Our definition of $\PiLP$ has the following interpretation: assuming the
target is known to satisfy all half-plane constraints in a set $H$, the
reduced-complexity possibly-unbounded polygon containing the target is
computed by solving four linear programs $(H,\theta)$, $\theta \in \{0,
\pi/2, \pi, 3\pi/2\}$; see
Figure~\ref{fig:target-localization-bounding-rectangle}.

\begin{figure}[h]
\begin{center}
  \includegraphics[width=.5\linewidth]{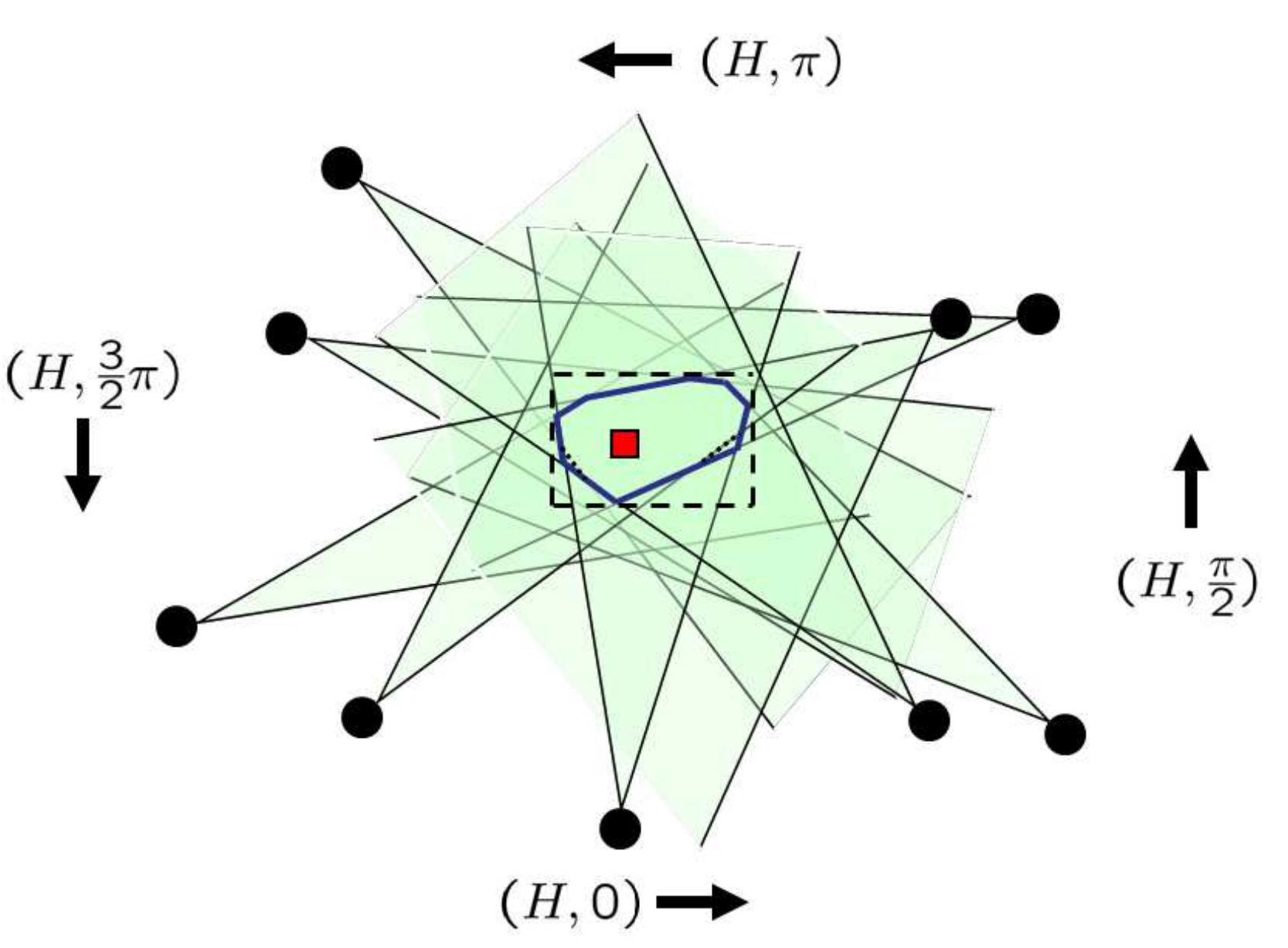}%
  \caption{The eight constraint half-planes (solid thick lines) are
    determined by the projection $\PiLP$. The dashed rectangle is the
    smallest axis-aligned bounding box containing the measurement set.}
  \label{fig:target-localization-bounding-rectangle}
\end{center}
\end{figure}

Finally, we review the approximated set-membership localization
recursion~\eqref{eq:set_memb_recurs_approx}.  We assume $\supind{h}{i}(t) =
\setdef{x\in\real^2}{\supind{a}{i}(t)^Tx\leq\supind{b}{i}(t),
  \norm{a_i}{}=1}$ is the half-plane containing the target measured by
sensor $i\in\until{n}$ at time $t\in\integernonnegative$.  The approximated
feasible position sets, elements of $\HH_8$, are
\begin{align*}
  {E}(t|t-1) &= \{  {h}_1(t|t-1),\dots,{h}_8(t|t-1)   \},
  \enspace \text{and}
  \\
  {E}(t|t)  &= \{  {h}_1(t|t),\dots,{h}_8(t|t)   \}.
\end{align*}

\smallskip
\noindent\emph{Initialization:}
Equation~\eqref{eq:set_memb_recurs_approx-initialization} reads
\begin{equation*}
  \{h_1(0|0),\dots,h_8(0|0)\}  = \PiLP\big(
  \{\supind{h}{1}(0),\dots,\supind{h}{n}(0)\} \big).
\end{equation*}

\smallskip
\noindent\emph{Time update:}
Assume $h_i(t|t) = \setdef{x\in\real^2}{\subind{a}{i}(t)^T
  x\leq\subind{b}{i}(t)}$, that is, $h_i(t|t)$ is characterized by the
coefficients $(\subind{a}{i}(t),\subind{b}{i}(t))$.  Since the target speed
satisfies $\norm{v}{}\leq\subscr{v}{max}$, at instant $t+\tau$ the target
is contained in the half-planes
\[
h_i(t+\tau|t) = \setdef{x\in\real^2}{\subind{a}{i}(t)^T
  x\leq\subind{b}{i}(t) + \subscr{v}{max} \tau}.
\]
Therefore, the time update consists in defining each $h_i(t+1|t)$ to be
$(\subind{a}{i}(t),\subind{b}{i}(t)+\subscr{v}{max})$; we refer to this
operation as to a \emph{time-translation} by an amount $\subscr{v}{max}$ of
the half-plane $h_i(t|t)$.  This time-update operation is equivalent to
equation~\eqref{eq:set_memb_recurs_approx-time-update} and does not
explicitly require an application of the projection $\PiLP$.

\smallskip
\noindent\emph{Measurement update:}
Equation~\eqref{eq:set_memb_recurs_approx-measurement-update} reads
\begin{gather*}
  \{h_1(t|t),\dots,h_8(t|t)\} = \PiLP(H(t)),
\end{gather*}
where the collection of constraints at time $t$ is
\begin{equation*}
  H(t) = \{h_1(t|t-1),\dots,h_8(t|t-1)\}\union
  \{\supind{h}{1}(t),\dots,\supind{h}{n}(t)\}.
\end{equation*}

\begin{remark}
  \begin{enumerate}
  \item During each iteration of the localization recursion, the time
    update step requires $8$ sums and the measurement update steps requires
    the solution of $4$ linear programs in $2$ variables and $n+8$
    constraints.

  \item Similar localization algorithms arise by selecting $\ell\geq3$ and
    by solving $\ell$ LP at each iteration parametrized by $\theta\in
    \{0,2\pi/\ell,\dots,2(\ell-1)\pi/\ell\}$. Larger values of $\ell$ lead
    to tighter approximating polygons. \oprocend

  \end{enumerate}
\end{remark}

\subsection{A distributed eight half-planes algorithm}
We consider a scenario in which the sensors measuring the target position
also have computation and communication capabilities so that they form a
synchronous network as described in Section~\ref{sec:network-modeling}.
Let $\subscr{\GG}{sn} = (\until{n}, \subscr{E}{sn})$ be the undirected
communication graph among the sensors $\until{n}$; assume $\GG$ is
connected.
Assume the sensors communicate at each time $t\in\integernonnegative$ and
perform measurements of the target at unspecified times in
$\integernonnegative$ (communication takes place at higher rate than
sensing). For simplicity, we assume the first measurement at each node
happens at time $0$.

We aim to design a distributed algorithm for the sensor network to localize
a moving target. The idea is to run \emph{local set-membership recursions}
(with time and measurement updates) at each node while exchanging
constraints in order to achieve constraints consensus on a set-membership
estimate.  Distributed constraint re-examination is obtained as follows: at
each time, each node keeps in memory the last $m$ measurements it took and,
after an appropriate time-update, re-introduces them into the $\PiLP$
computation.  We begin with an informal description.
\begin{quote}
  \emph{Eight Half-Planes Consensus Algorithm}: The processor state at each
  processor $i$ contains a set $\Hoptimal{i}$ of $8$ candidate optimal
  constraints and a set $\Hmeasured{i}$ containing the last $m$
  measurements, for some $m>0$. These sets are initialized to the first
  sensor measurement.  At each communication round, the processor performs
  the following tasks: (i) it transmits $\Hoptimal{i}$ to its out-neighbors
  and acquires from its in-neighbors their candidate constraints; (ii) it
  performs a time-update, that is, a time-translation by an amount $\vmax$,
  of all candidate optimal, measured and received constraints; (iii) it
  updates the set of measured constraints if a new measurement is taken;
  and (iv) it updates $\Hoptimal{i}$ to be the projection $\PiLP$ of all
  candidate optimal, measured and received constraints.
\end{quote}

Next we give a pseudo-code description.

\medskip \hrule width \linewidth \smallskip

\noindent\begin{minipage}{0.44\linewidth}\textbf{\texttt{Problem data:}}%
\end{minipage}%
\begin{minipage}[t]{0.56\linewidth}A network $\subscr{\GG}{sn}$ of
  sensors that measure half-plane constraints%
\end{minipage}
\vspace{.05em}

\noindent\begin{minipage}{0.44\linewidth}\textbf{\texttt{Algorithm:}}%
\end{minipage}%
\begin{minipage}{0.56\linewidth}Eight Half-Planes Consensus%
\end{minipage}

\noindent\begin{minipage}{0.44\linewidth}\textbf{\texttt{Message
      alphabet:}}%
\end{minipage}%
\begin{minipage}{0.56\linewidth}$\alphabet = \HH_8 \union \{\nll\}$%
\end{minipage}

\noindent\begin{minipage}{0.44\linewidth}\textbf{\texttt{Processor state:}}%
\end{minipage}%
\begin{minipage}[t]{0.56\linewidth}
  $\Hoptimal{i} \in \HH_8$\\
  $\Hmeasured{i} \in \HH_m$ for some $m>0$
\end{minipage}
\vspace{.05em}

\noindent\begin{minipage}{0.44\linewidth}\textbf{\texttt{Initialization:}}%
\end{minipage}%
\begin{minipage}[t]{0.56\linewidth}
  $\Hoptimal{i} := \{\supind{h}{i}(0),\dots,\supind{h}{i}(0)\}$\\
  $\Hmeasured{i} := \{\supind{h}{i}(0),\dots,\supind{h}{i}(0)\}$%
\end{minipage}

\bigskip

\noindent\textbf{\texttt{function}} $\msg\big( (\Hoptimal{i},
\Hmeasured{i}), j\big)$
\begin{algorithmic}[1]
  \STATE \textbf{return} $\Hoptimal{i}$
\end{algorithmic}

\medskip

\noindent\textbf{\texttt{function}}  $\stf\big((\Hoptimal{i}, \Hmeasured{i}), y\big)$ \\
\emph{\% executed by node~$i$, with $y_j := \Hoptimal{j}$}\\[-1.9em]
\begin{algorithmic}[1]

  \STATE time-translate by an amount $\vmax$  all constraints
  constraints in     $\Hmeasured{i}$, $\Hoptimal{i}$, and
  $\union_{j\in \innbrs(i)} y_j$

  \IF{a new measurement is taken at this time,}
  \STATE add it to $\Hmeasured{i}$; drop oldest
  measurement from $\Hmeasured{i}$
  \ENDIF

  \STATE set $\Hoptimal{i} := \PiLP \Big( \Hmeasured{i} \union
  \Hoptimal{i} \union_{j\in \innbrs(i)} y_j \Big)$

  \STATE \textbf{return} $(\Hoptimal{i},\Hmeasured{i})$
\end{algorithmic}

\smallskip \hrule width \linewidth \medskip

Finally, we collect some straightforward facts about this algorithm; we
omit the proof in the interest of brevity.

\begin{proposition}\textbf{\textup{(Properties of the eight half-planes consensus algorithm)}}
  Consider a connected network $\subscr{\GG}{sn}$ of sensors that measure
  half-plane constraints and that implement the eight half-plane consensus
  algorithm.  Assume the target does not move, that is, set $\vmax=0$. The
  following statements hold:
  \begin{enumerate}
  \item the candidate optimal constraints at each node contain the target
    at each instant of time;
  \item the candidate optimal constraints at each node monotonically
    improve over time; and
  \item additionally, if each node makes at most $m$ measurements in finite
    time, then the candidate optimal constraints at each node converge in
    finite time to the globally optimal $8$ half-plane constraints.
  \end{enumerate}
\end{proposition}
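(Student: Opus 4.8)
The plan is to dispatch the three facts in increasing order of difficulty, using throughout that $\vmax=0$ turns every time-translation into the identity, so the target sits at a fixed point $p\in[\xmin,\xmax]\times[\ymin,\ymax]$ and every measured half-plane, as well as the fixed box constraints present in each program~\eqref{eq:vtheta}, contains $p$. The only facts about the projection I will need are the two structural properties recorded after its definition: for any collection $H$, (a) the region $\bigcap\PiLP(H)$ contains $\bigcap H$, and (b) for each $\theta\in\{0,\pi/2,\pi,3\pi/2\}$ the map $\PiLP$ retains the constraints active in $(H,\theta)$, so that $\phi_\theta(\PiLP(H))=\phi_\theta(H)$ (this is the design point of $\PiLP$: it preserves the support function in the four axis directions).

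Fact~(i) I would prove by induction on the communication round. Initially $\Hoptimal{i}$ consists of copies of a measured half-plane, whose region contains $p$. For the step, the set $S_i:=\Hmeasured{i}\union\Hoptimal{i}\union_{j\in\innbrs(i)}y_j$ fed to $\PiLP$ is, by the inductive hypothesis together with the sensing assumption on any freshly added measurement, a collection of half-planes each containing $p$; hence $p\in\bigcap S_i\subseteq\bigcap\PiLP(S_i)$ by property~(a), and the updated $\Hoptimal{i}$ still contains $p$. Fact~(ii) then follows from property~(b) and monotonicity of the linear program under the addition of constraints: since $\Hoptimal{i}\subseteq S_i$, for each $\theta$ we have $\phi_\theta(\Hoptimal{i}(t+1))=\phi_\theta(\PiLP(S_i))=\phi_\theta(S_i)\le\phi_\theta(\Hoptimal{i}(t))$, so the four directional support values — equivalently the axis-aligned bounding box of the approximating polygon — shrink monotonically.

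Fact~(iii) is where the real work lies, and the plan is to reduce it to Theorem~\ref{thm:correctness-of-constraints-consensus}. Since each node takes at most $m$ measurements within a finite horizon and $\Hmeasured{i}$ stores $m$ constraints, every genuine measurement is permanently retained; thus after a finite time $T_0$ each $\Hmeasured{i}$ is constant and equals node $i$'s complete measurement set. Writing $H^\star$ for the union of all the $\Hmeasured{i}$ together with the fixed box constraints, the dynamics for $t\ge T_0$ decouples, for each fixed $\theta$, into a constraints consensus algorithm for the abstract program $(H^\star,\phi_\theta)$, which is basis regular of combinatorial dimension $2$: each node carries its $\theta$-basis as two of the eight half-planes in $\Hoptimal{i}$, transmits it inside the message $\Hoptimal{i}$, and — crucially — re-examines its own constraints $\Hmeasured{i}$ at every round, which is precisely the re-examination step that repairs the lack of persistency. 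Because $\subscr{\GG}{sn}$ is connected and time-independent, hence jointly strongly connected, Theorem~\ref{thm:correctness-of-constraints-consensus} yields finite-time convergence of each $\theta$-basis to a solution of $(H^\star,\phi_\theta)$; running the four directions in parallel, $\Hoptimal{i}$ converges to $\PiLP(H^\star)$, the globally optimal eight half-planes.

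The main obstacle, and the point I would treat most carefully, is that the per-direction program inside $\PiLP$ is solved over the full pooled set $S_i$ — which also contains the other three directions' half-planes and the raw measurements — rather than over the $\theta$-bases alone, so the reduction above is not literally the algorithm of Section~\ref{sec:network-ALP}. To close this gap I would verify that the three ingredients driving the proof of Theorem~\ref{thm:correctness-of-constraints-consensus} survive for $\phi_\theta$: the value $\phi_\theta(\Hoptimal{i}(t))$ is monotone non-increasing (Fact~(ii)), it ranges over the finitely many values attainable by subsets of $H^\star$ and is bounded below by $\phi_\theta(H^\star)$ since $\Hoptimal{i}\subseteq H^\star$, and at convergence no constraint of $H^\star$ violates $\bigcup_i\Hoptimal{i}$ — the latter because every such constraint lies in some $\Hmeasured{i}$ and is re-examined, so Lemma~\ref{lemma:locality2} forces $\phi_\theta(\bigcup_i\Hoptimal{i})=\phi_\theta(H^\star)$. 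Since adding extra constraints only ever helps monotonicity and keeps every set inside $H^\star$, it cannot invalidate these steps. Finally, convergence of the actual half-planes, and not merely of the region, requires uniqueness of the optimal $\theta$-basis, which I would assume as a genericity condition exactly as in fact~\ref{fact:uniqueness} of Theorem~\ref{thm:correctness-of-constraints-consensus}.
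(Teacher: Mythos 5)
The paper offers no proof to compare against: it states these as ``straightforward facts'' and explicitly omits the argument, so your write-up is necessarily a reconstruction rather than a match or mismatch with the authors' reasoning. That said, your proof is sound and fills the gap in the way the paper's machinery invites. Facts~(i) and~(ii) follow exactly as you say from the two structural properties of $\PiLP$ that the paper records (with $\vmax=0$ making the time-translation trivial): since $\PiLP(S)\subseteq S$ as a collection of constraints, the region $\bigcap\PiLP(S)$ contains $\bigcap S$ and hence the target, and since the $\theta$-basis of $S$ is retained in $\PiLP(S)$, the value $\phi_\theta$ is preserved under projection and evolves monotonically because $\Hoptimal{i}(t)\subseteq S_i$. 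For fact~(iii), you correctly identify the one genuine issue --- that the update is not literally the constraints consensus algorithm of Section~IV, because each directional program is solved over the pooled set rather than over that direction's bases alone --- and your resolution is the right one: the three pillars of the proof of Theorem~\ref{thm:correctness-of-constraints-consensus} (monotone values over a finite range, the edge-wise non-violation argument at convergence, and the locality-based union argument combined with re-examination of $\Hmeasured{i}$) are all insensitive to the presence of extra constraints drawn from $H^\star$, so the conclusion $\phi_\theta(\bigcup_i\Hoptimal{i})=\phi_\theta(H^\star)$ goes through for each of the four directions.

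Two small points. First, your inequality $\phi_\theta(S_i)\le\phi_\theta(\Hoptimal{i}(t))$ has the opposite sense from the paper's abstract-program convention, under which monotonicity gives $\phi(F)\le\phi(G)$ for $F\subseteq G$; since the underlying LP is a maximization this is only a bookkeeping choice, and the geometric content (the axis-aligned bounding box shrinks) is correct either way. Second, finite-time convergence of the candidate \emph{set} $\Hoptimal{i}$, as opposed to its four values, requires the same tacit assumption already used in the proof of Theorem~\ref{thm:correctness-of-constraints-consensus} (that the local solver returns the current basis when no value improvement is available), and your closing remark that identifying the limit with \emph{the} globally optimal eight half-planes needs uniqueness of the optimal bases is an honest caveat that the proposition's informal statement glosses over.
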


Next, let us state some memory complexity bounds for the eight half-plane
consensus algorithm. Again, we adopt the convention that a memory unit is
the amount of memory required to store a constraint in $H$.  Each node $i$
of the network requires $(8+m+8\card(\innbrs(i)))$ memory units in order to
implement the algorithm.
If we assume that number $m$ of stored measurements is independent of $n$
and that the indegree of each node is bounded irrespectively of $n$, then
the algorithm memory complexity is in $O(1)$. Vice-versa, in worst-case
graphs, the algorithm memory complexity is in $O(n)$.

\section{Application to formation control for robotic networks}
\label{sec:mintime-formation}

In this section we apply constraints consensus ideas to formation control
problems for networks of mobile robots.  We focus on formations with the
shapes of a point, a line, or a circle. (The problem of formation control
to a point is usually referred to as the rendezvous or gathering problem.)
We solve these formation control problems in a time-efficient manner via a
distributed algorithm regulating the communication among robots and the
motion of each robot.

\subsection{Model of robotic network}
We define a robotic network as follows. Each robot is equipped with a
processor and robots exchange information via a communication graph.
Therefore, the group of robots has the features of a synchronous network
and can implement distributed algorithms as defined in
Section~\ref{sec:network-modeling}. However, as compared with a synchronous
network, a robotic network has two distinctions: (i) robots control their
motion in space, and (ii) the communication graph among the robots depends
upon the robots positions, rather than time.

Specifically, the \emph{robotic network} evolves according to the following
discrete-time communication, computation and motion model. Each robot
$i\in\until{n}$ moves between rounds according to the first order
discrete-time integrator $\supind{p}{i}(t+1) = \supind{p}{i}(t) +
\supind{u}{i}(t)$, where $\supind{p}{i}\in\real^2$ and
$\|\supind{u}{i}\|^2\leq \umax>0$.  At each discrete time instant, robots
at positions $P_n$ communicate according to the disk graph
$\subscr{\GG}{disk}(P_n) = (\until{n}, \subscr{E}{disk}(P_n))$ defined as
follows: an edge $(i,j)\in\until{n}^2$, $i\ne{j}$, belongs to
$\subscr{E}{disk}(P_n)$ if and only if
$\norm{\supind{p}{i}-\supind{p}{j}}{}\leq \rcmm$ for some $\rcmm>0$.

A \emph{distributed algorithm for a robotic network} consists of (1) a
distributed algorithm for a synchronous network, that is, a processor
state, a message alphabet, a message-generation and a state-transition
function, as described in Section~\ref{sec:network-modeling}, (2) an
additional function, called the \emph{control function}, that determines
the robot motion, with the following domain and co-domain:
\begin{equation*}
  \map{\ctrl} {\real^2  \times W \times \alphabet^n} {\cball{\umax}{0} }.
\end{equation*}
Additionally, we here allow the message generation and the state transition
to depend upon not only the processor state but also the robot position.
The state of the robotic network evolves as follows. First, at each
communication round $t$, each processor $i$ sends to its outgoing neighbors
a message computed by applying the message-generation function to the
current values of $\supind{p}{i}$ and $\supind{w}{i}$.  After a negligible
period of time, the $i$th processor resets the value of its processor state
$\supind{w}{i}$ by applying the state-transition function to the current
values of $\supind{p}{i}$ and $\supind{w}{i}$, and to the messages received
at time $t$. Finally, the position $\supind{p}{i}$ of the $i$th robot at
time $t$ is determined by applying the control function to the current
value of $\supind{p}{i}$ and $\supind{w}{i}$, and to the messages received
at time $t$.

In formal terms, if $\supind{y}{i}(t)$ denotes the message vector received
at time $t$ by agent $i$ (with $\supind{y}{i}_j(t)$ being the message
received from agent $j$), then the evolution is determined by
\begin{equation*}
  \begin{split}
    \supind{y}{i}_j(t) &= \msg(p^{[j]}(t-1),w^{[j]}(t-1),i),\\
    \supind{w}{i}(t) &
    = \stf(p^{[i]}(t-1),\supind{w}{i}(t-1), y^{[i]}(t)), \\
    p^{[i]}(t) &= p^{[i]}(t-1) +
    \ctrl(p^{[i]}(t-1), \supind{w}{i}(t), y^{[i]}(t)),
  \end{split}
\end{equation*}
with the convention that $\msg(p^{[j]}(t-1),w^{[j]}(t-1),i)=\nll$ if
$(i,j)\not\in \subscr{E}{disk}(p^{[1]}(t-1),\dots,p^{[n]}(t-1))$.

\subsection{Formation tasks and related optimization problems}
\label{sec:mintime-formation_optimal-formation}
Numerous definitions of robot formation are considered in the multi-agent
literature. Here we consider a somehow specific situation.  Let
$\subscr{S}{points}$, $\subscr{S}{lines}$, and $\subscr{S}{circles}$ be the
set of points, lines and circles in the plane, respectively. We refer to
these three sets as the \emph{shape sets}.  We aim to lead all robots in a
network to a single element of one of the shape sets If $S$ is a selected
shape set, the \emph{formation task} is achieved by the robotic network if
there exists a time $T\in\natural$ such that for all $t\geq T$, all robots
$i\in\until{n}$ satisfy $\supind{p}{i}(t) \in s$ for some element $s\in S$.
Specifically, the \emph{point-formation, or rendezvous task} requires all
connected robots to be at the same position, the \emph{line-formation task}
requires all connected robots to be on the same line, and the
\emph{circle-formation task} requires all connected robots to be on the
same circle.

We are interested in distributed algorithms that achieve such formation
tasks optimally with respect to a suitable cost function. For the
point-formation and line-formation tasks, we aim to minimize completion
time, i.e., the time required by all robots to reach a common shape.  For
the circle-formation task, we aim to minimize the product between the time
required to reach a common circle, and the diameter of the common circle.

\begin{remark}[Circle formation] For the circle-formation problem we do not
  select the completion time as cost function, because of the following
  reasons. The centralized version of the minimum time circle-formation
  problem is equivalent to finding the minimum-width annulus containing the
  point-set. For arbitrary data sets, the minimum-width annulus has
  arbitrarily large minimum radius and bears similarities with the solution
  to the smallest stripe problem.  For some configurations, all points are
  contained only in a small fraction of the minimum-width annulus; this is
  not the solution we envision when we consider moving robots in a circle
  formation.  Therefore, we consider, instead, the smallest-area
  annulus. This cost function penalizes both the difference of the radiuses
  of the annulus (width of the annulus) and their sum.  \oprocend
\end{remark}

The key property of the \emph{minimum-time point-formation task},
\emph{minimum-time line-formation task}, and \emph{optimum circle-formation
  task} is that their centralized versions are equivalent to finding the
smallest ball, stripe and annulus, respectively, enclosing the $n$ agents'
initial positions. We state these equivalences in the following lemma
without proof.

\begin{lemma}[Optimal shapes from geometric optimization]
  Given a set of distinct points $\{p_1,\dots,p_n\} \subset \real^2$,
  consider the three optimization problems:
  \begin{equation*}
  \begin{split}
    \min_{p\in\subscr{S}{points}} \; \max_{j\in\until{n}} & \; \norm{p_j - p}{},\\
    \min_{\ell\in\subscr{S}{lines}}\; \max_{j\in\until{n}} &\; \dist(p_j, \ell), \\
    \min_{c\in\subscr{S}{circles}} \; \max_{j\in\until{n}} &\; \dist(p_j, c) \cdot
    \radius(c),
  \end{split}
  \end{equation*}
  where $\radius(c)$ denotes the radius of the circle $c$.  These three
  optimization problems are equivalent to the smallest enclosing ball, the
  smallest enclosing stripe (for points in stripe-generic position), and
  the smallest enclosing annulus problem, respectively.

  Therefore, they are abstract optimization problems with combinatorial
  dimension $3$, $5$ and $4$, respectively.
\end{lemma}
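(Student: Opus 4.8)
The plan is to establish the three geometric equivalences one at a time and then read off the combinatorial dimensions directly from Example~\ref{rem:examples} (together with the stripe result proved in the Appendix). In each case the objective is a min--max over the data points, so the strategy is uniform: fix the shape parameters, identify the tightest enclosing object determined by the extremal point(s), and check that the two objective values coincide up to the elementary geometric identity relating half-width to width (and, for the annulus, up to a multiplicative constant). For the point-formation problem this is immediate: for a fixed center $p\in\subscr{S}{points}$ the quantity $\max_{j}\norm{p_j-p}{}$ is exactly the radius of the smallest ball centered at $p$ that contains $\{p_1,\dots,p_n\}$, so minimizing over $p$ returns the center and radius of the smallest enclosing ball. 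The two problems thus share minimizer and optimal value.

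For the line-formation problem I would use that every stripe is the set of points within some half-width $w$ of a center line $\ell$, and that $\{p_1,\dots,p_n\}$ lies in this stripe if and only if $\max_{j}\dist(p_j,\ell)\le w$. Hence, for a fixed $\ell\in\subscr{S}{lines}$, the smallest enclosing stripe with center line $\ell$ has half-width $\max_{j}\dist(p_j,\ell)$ and width $2\max_{j}\dist(p_j,\ell)$; minimizing $\max_{j}\dist(p_j,\ell)$ over lines is therefore equivalent, up to the harmless factor $2$, to minimizing the stripe width. The stripe-generic position hypothesis is precisely what guarantees that the extremal triple is unique, so that the value function is well defined and the transcription into an abstract program is legitimate.

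For the circle-formation problem I would relate the product objective to the annulus area. For a circle $c$ with center $q$ and radius $\rho$, writing $r_{\min}(q)=\min_j\norm{p_j-q}{}$ and $r_{\max}(q)=\max_j\norm{p_j-q}{}$, one has $\dist(p_j,c)=\abs{\norm{p_j-q}{}-\rho}$, so that $\max_j\dist(p_j,c)=\max\big(\rho-r_{\min}(q),\,r_{\max}(q)-\rho\big)$. Choosing the mid-radius $\rho=(r_{\min}(q)+r_{\max}(q))/2$ makes this equal to $(r_{\max}(q)-r_{\min}(q))/2$, and then
\[
\max_j\dist(p_j,c)\cdot\radius(c)=\frac{r_{\max}(q)^2-r_{\min}(q)^2}{4}=\frac{1}{4\pi}\bigl(\text{area of the annulus }[r_{\min}(q),r_{\max}(q)]\bigr).
\]
Thus, over mid-circles, the product objective is a fixed multiple of the enclosing-annulus area, and minimizing it over $q$ returns the center of the smallest enclosing annulus. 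I expect the real content of the argument to sit exactly here: the unconstrained minimization over all circles is degenerate (a radius-$0$ circle makes the product vanish), so the crux is to justify the reduction to mid-circles, i.e.\ to argue that radius collapse is excluded by the circle-formation requirement and that, among non-degenerate circles, the optimum is attained at the mid-radius of the smallest-area annulus. Once this reduction is in place the equivalence follows from the displayed identity.

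With the three equivalences established, the combinatorial-dimension claims are immediate from Example~\ref{rem:examples} and the Appendix: the smallest enclosing ball in $\real^2$ has combinatorial dimension $d+1=3$, the smallest enclosing stripe has combinatorial dimension $5$, and the smallest enclosing annulus has combinatorial dimension $4$. I anticipate the ball and stripe cases to be routine, and the annulus case --- specifically pinning down the correct domain of circles and ruling out the degenerate collapse --- to be the main obstacle.
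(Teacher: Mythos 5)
The paper states this lemma explicitly \emph{without proof}, so there is no reference argument to compare against; I can only assess your proposal on its own terms. Your treatment of the ball and stripe cases is correct and complete: for a fixed center the max distance is exactly the radius of the tightest ball at that center, for a fixed line the max distance is exactly the half-width of the tightest stripe with that axis, so the two minimizations coincide (up to the harmless factor $2$) with the smallest enclosing ball and smallest enclosing stripe problems, and the combinatorial dimensions then follow from Example~\ref{rem:examples} and the Appendix.

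The gap is exactly where you locate it, and it is genuine rather than a routine detail you can defer: the annulus equivalence does not follow from the identity you display, because the third optimization problem, read literally, has infimum $0$. For any fixed center $q$, the function $\rho \mapsto \max\bigl(\rho - r_{\min}(q),\, r_{\max}(q)-\rho\bigr)\cdot\rho$ tends to $0$ as $\rho\to 0^{+}$ (the max term stays bounded by $r_{\max}(q)$ while the radius collapses), so the minimum over all of $\subscr{S}{circles}$ is not attained at the mid-circle of the smallest-area annulus. The mid-radius $\rho=(r_{\min}(q)+r_{\max}(q))/2$ is only a \emph{local} minimizer: the two branches of the objective have one-sided derivatives $-r_{\min}(q)<0$ and $r_{\max}(q)>0$ there, but the branch $(r_{\max}(q)-\rho)\rho$ decreases to $0$ as $\rho\to 0^{+}$. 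Your proposed fix --- that ``radius collapse is excluded by the circle-formation requirement'' --- cannot be extracted from the displayed optimization problem; it must be imposed as an extra hypothesis, e.g.\ by restricting the feasible set to circles equidistant from the nearest and farthest data point at their center (which is what the paper implicitly does via its \texttt{target\_set} construction), at which point your identity $\max_j\dist(p_j,c)\cdot\radius(c)=\bigl(r_{\max}(q)^2-r_{\min}(q)^2\bigr)/4$ does establish the equivalence with the smallest-area annulus. As written, the step ``the optimum is attained at the mid-radius of the smallest-area annulus'' would fail, so you should either add the restriction to the statement or supply an argument excluding the degenerate branch.
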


We conclude this section with some useful notation.  We let
$\texttt{target\_set}(\{p_1,\dots,p_n\})$ denote the point, the line or the
circle equidistant from the boundary of the smallest enclosing ball, stripe
or annulus, respectively.

\subsection{Connectivity assumption, objective and strategy}
We assume that the robotic network is connected at initial time, i.e., that
the graph $\subscr{\GG}{disk}(P_n(0))$ is connected, and we aim to achieve
the formation task while guaranteeing that the state-dependent
communication graph remains connected during the evolution.  The following
connectivity maintenance strategy was originally proposed
in~\cite{HA-YO-IS-MY:99} and a comprehensive discussion is
in~\cite{FB-JC-SM:09}.  The key idea is to restrict the allowable motion of
each robot so as to preserve the existing edges in the communication graph.
We present this idea in three steps.

First, in a network with communication edges $\subscr{E}{disk}$, if agents
$i$ and $j$ are neighbors at time $t\in\naturalzero$, then we require that
their positions at time $t+1$ belong to
\begin{equation*}
  \XX(\supind{p}{i}(t), \supind{p}{j}(t)) =
  \Bigcball{\half\rcmm}{\frac{\supind{p}{i}(t)+\supind{p}{j}(t)}{2}}.
\end{equation*}
If all neighbors of agent $i$ at time $t$ are at locations
$\supind{Q}{i}(t) = \{q_1,\dots,q_l\}$, then the (convex) \emph{constraint
  set} of agent $i$ is
\begin{equation*}
    \XX(\supind{p}{i}(t), \supind{Q}{i}(t))
    = \bigcap_{q\in \{q_1,\dots,q_l\}}
    \Bigcball{\half\rcmm}{\frac{\supind{p}{i}(t)+q}{2}}.
\end{equation*}

Second, given $p$ and $q$ in $\real^2$ and a convex closed set $Q \subset
\real^2$ with $p \in Q$, we introduce the \emph{from-to-inside} function,
denoted by $\fti$ and illustrated in Figure~\ref{fig:fti}, that computes
the point in the closed segment $[p, q]$ which is at the same time closest
to $q$ and inside $Q$. Formally,
\begin{equation*}
    \fti(p, q,Q) =
    \begin{cases}
        q, &\text{if } q\in Q,\\
        [p,q] \intersection \partial Q, \quad&  \text{if } q\notin Q.
    \end{cases}
\end{equation*}

\begin{figure}[h]
\begin{center}
  \includegraphics[width=.28\linewidth]{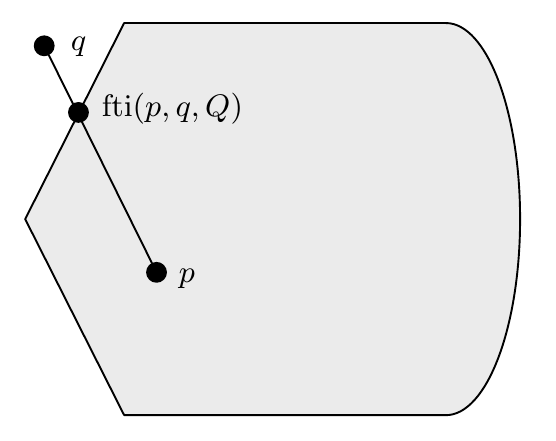}%
  \includegraphics[width=.28\linewidth]{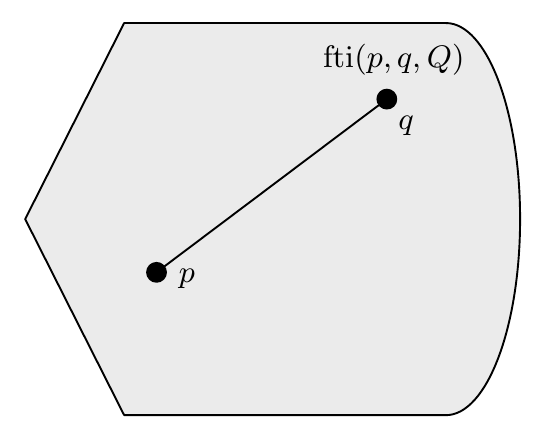}%
  \caption{Illustration of the $\fti$ function; courtesy of the authors
    of~\cite{FB-JC-SM:09}.}
  \label{fig:fti}
\end{center}
\end{figure}

Third and final, we assume that, independent of whatever control algorithm
dictates the robots evolution, if $\supind{p}{i}_{\text{target}}(t)$ denote
the desired target positions of agent $i$ at time $t+1$, then we allow
robot $i$ to move towards that location only so far as the constraint set
allows. This is encoded by:
\begin{equation*}
  \supind{p}{i}(t+1) = \fti(\supind{p}{i}(t),
  \supind{p}{i}_{\text{target}}(t), \XX(\supind{p}{i}(t),
  \supind{Q}{i}(t))) .
\end{equation*}

\subsection{Move-to-consensus-shape strategy}
The minimum-time point-formation, minimum-time line-formation, and optimum
circle formation tasks appear intractable in their general form due to the
state-dependent communication constraints.  To attack these problem, we
search for an efficient strategy that converges to the optimal one when the
upper bound on the robot speed $\umax$ goes to zero or, in other words,
when information transmission tends to be infinitely faster than motion.
To design such a strategy, we aim to reach consensus on the centralized
solution to the problem, i.e., the optimal shape, and use the solution as a
reference for the agents motion.

A simple sequential solution is as follows: first, the agents compute the
optimal shape via constraints consensus, and then, when consensus is
achieved, they move toward the closest point in the target shape (point,
line or circle).  An improved strategy allows concurrent execution of
constraints consensus and motion: while the constraints consensus algorithm
is running, each agent moves toward the estimated target position while
maintaining connectivity of the communication graph. We first provide an
informal description.

\begin{quote}
  \emph{Move-to-consensus-shape strategy}: The processor state at each
  robot $i$ consists of a set $\supind{B}{i}$ of $\delta$ candidate optimal
  constraints and a binary variable $\supind{halt}{i}\in\{0,1\}$.  The set
  $\supind{B}{i}$ is initialized to $\supind{p}{i}(0)$ and
  $\supind{halt}{i}$ is initialized to $0$. At each communication round,
  the processor performs the following tasks: (i) it transmits
  $\supind{p}{i}$ and $\supind{B}{i}$ to its neighbors and acquires its
  neighbors' candidate constraints and their current position; (ii) it runs
  an instance of the constraints consensus algorithm for the geometric
  optimization program of interest (smallest enclosing ball, stripe or
  annulus); if the constraints consensus halting condition is satisfied, it
  sets $\supind{halt}{i}$ to $1$; (iii) it computes a robot target position
  based on the current estimate of the optimal shape; (iv) it moves the
  robot towards the target position while respect input constraint and, if
  $\supind{halt}{i}$ is still zero, enforcing connectivity with its current
  neighbors.
\end{quote}

Next we give a pseudo-code description.  We let
$P_n(0)=\{\supind{p}{1}(0),\ldots,\supind{p}{n}(0)\}$.

\bigskip\bigskip
\bigskip \hrule width \linewidth \smallskip

\noindent\begin{minipage}{0.44\linewidth}\textbf{\texttt{Problem data:}}%
\end{minipage}%
\begin{minipage}[t]{0.56\linewidth} A robotic network and a shape set%
\end{minipage}

\noindent\begin{minipage}{0.44\linewidth}\textbf{\texttt{Algorithm:}}%
\end{minipage}%
\begin{minipage}{0.56\linewidth}Move-to-Consensus-Shape%
\end{minipage}

\noindent\begin{minipage}{0.44\linewidth}\textbf{\texttt{Message
      alphabet:}}%
\end{minipage}%
\begin{minipage}{0.56\linewidth}$\alphabet = \real^2 \union (\real^2)^\delta \union \{\nll\}$%
\end{minipage}

\noindent\begin{minipage}{0.44\linewidth}\textbf{\texttt{Processor
      state:}}%
\end{minipage}%
\begin{minipage}[t]{0.56\linewidth}
  $\supind{B}{i} \subset P_n(0)$ with $\card(\supind{B}{i})=\delta$\\%
  $\supind{\texttt{halt}}{i}\in\{0,1\}$
\end{minipage}

\noindent\begin{minipage}{0.44\linewidth}\textbf{\texttt{Physical
      state:}}%
\end{minipage}%
\begin{minipage}[t]{0.56\linewidth}
  $\supind{p}{i} \in \real^2$
\end{minipage}

\smallskip
\noindent\begin{minipage}{0.44\linewidth}\textbf{\texttt{Initialization:}}%
\end{minipage}%
\begin{minipage}[t]{0.56\linewidth}
  $\supind{B}{i} := \{\supind{p}{i}(0),\dots,\supind{p}{i}(0)\}$\\
  $\supind{\texttt{halt}}{i}:=0$
\end{minipage}

\bigskip

\noindent\textbf{\texttt{function}}  $\msg\big(\supind{p}{i},\supind{B}{i},j\big)$ \begin{algorithmic}[1]
  \STATE \textbf{return} $(\supind{p}{i},\supind{B}{i})$
\end{algorithmic}

\medskip

\noindent\textbf{\texttt{function}}  $\stf\big(\supind{B}{i},
\supind{\texttt{halt}}{i}, y\big)$ \\
\emph{\% executed by node~$i$, with $y_j=(\supind{p}{j},\supind{B}{j})
  := \msg(\supind{B}{j},i)$}\\[-1.9em]
\begin{algorithmic}[1]
  \STATE $\subscr{S}{tmp} := \{\supind{p}{i}(0)\} \union \supind{B}{i} \union \big(
  \union_{j\in \innbrs(i)} \supind{B}{j} \big)$
  \STATE $\supind{B}{i} := \subexLP( \subscr{S}{tmp}, \supind{B}{i})$
  \STATE \textbf{if} $\supind{B}{i}$ has not changed for $2n$ rounds,
  \STATE \quad \textbf{then}  $\supind{\texttt{halt}}{i}:=1;$ \quad \textbf{end if}
  \STATE \textbf{return} $(\supind{B}{i}, \supind{\texttt{halt}}{i})$
\end{algorithmic}
\medskip

\noindent\textbf{\texttt{function}}  $\ctrl\big(\supind{B}{i}, \supind{p}{i}, y\big)$ \\
\emph{\% executed by node~$i$, with $y_j=(\supind{p}{j},\supind{B}{j})
  := \msg(\supind{B}{j},i)$}\\
\emph{\% map \texttt{target\_set} defined in
  Section~\ref{sec:mintime-formation_optimal-formation}}\\[-1.9em]
\begin{algorithmic}[1]
  \STATE $\subscr{S}{\texttt{trgt}} := \texttt{target\_set}(\supind{B}{i})$\\
  \STATE $\displaystyle \subscr{p}{trgt} := \arg \min_{p \in \subscr{S}{\texttt{trgt}}}
  \norm{\supind{p}{i} - p}{}$
  \STATE \textbf{if} $\supind{\texttt{halt}}{i}=0$, \textbf{then}\\
  $\subscr{\XX}{cnstr}
  := \bigcap_{j\in \innbrs(i)}
  \Bigcball{\half\rcmm}{\frac{\supind{p}{i}+\supind{p}{j}}{2}}
  \bigcap \bigcball{\umax}{\supind{p}{i}}$\\
  \textbf{else} $\subscr{\XX}{cnstr}
  := \bigcball{\umax}{\supind{p}{i}};$ \quad \textbf{end if}
  \STATE \textbf{return} $\fti(\supind{p}{i}, \subscr{p}{trgt}, \subscr{\XX}{cnstr})-\supind{p}{i}$
\end{algorithmic}

\smallskip \hrule width \linewidth \medskip

We refer the interested reader to~\cite{GN-FB:06d} for numerical simulation
results for the move-to-consensus-shape strategy.  Finally, we state the
correctness of the algorithm and omit the proof in the interest of brevity.

\begin{proposition}
  \textbf{\textup{(Properties of the move-to-consensus-shape algorithm)}}
  On a robotic network with communication graph $\subscr{G}{disk}$ and
  bounded control inputs $\umax$, the move-to-consensus-shape strategy
  achieves the desired formation control tasks.

  In the limit as $\umax\to0^+$, the move-to-consensus-shape strategy
  solves the optimal formation control tasks, i.e., the minimum-time
  point-formation, minimum-time line-formation, and optimum circle
  formation tasks.
\end{proposition}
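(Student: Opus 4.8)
The plan is to decouple the combinatorial consensus from the physical motion and then to treat the two assertions (feasibility for fixed $\umax>0$, and asymptotic optimality as $\umax\to0^+$) separately. The crucial preliminary observation, which I would establish first, is that the constraints consensus embedded in the \texttt{stf} function operates on the \emph{fixed} constraint set $H=P_n(0)=\{\supind{p}{1}(0),\dots,\supind{p}{n}(0)\}$: step~1 of \texttt{stf} re-examines each node's \emph{initial} position $\supind{p}{i}(0)$ and every basis satisfies $\supind{B}{i}\subset P_n(0)$. Hence the abstract optimization program $(H,\phi)$ encoding the smallest enclosing ball, stripe, or annulus is time-invariant even though the robots move and the disk graph changes with their positions. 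This reduces correctness of the combinatorial part to Theorem~\ref{thm:correctness-of-constraints-consensus}, provided the time-varying communication graph stays jointly strongly connected.

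The second step verifies that connectivity hypothesis and extracts the common target shape. While no robot has yet halted, every robot applies the $\fti$ map against its connectivity constraint set $\XX(\supind{p}{i},\supind{Q}{i})$, which by construction keeps each existing edge intact (both endpoints remain in the ball of radius $\half\rcmm$ about their midpoint). Thus $\subscr{\GG}{disk}(P_n(t))$ stays connected, and since an undirected connected graph is strongly connected it is jointly strongly connected. Theorem~\ref{thm:correctness-of-constraints-consensus} then gives finite-time convergence of each $\supind{B}{i}$ to a solution of $(H,\phi)$, and in stripe-/annulus-generic position, where the minimal basis is unique, fact~\ref{fact:uniqueness} gives agreement on a single optimal shape $s^\star=\texttt{target\_set}(P_n(0))$. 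Because $\diam(\subscr{\GG}{disk})\leq n-1$, the threshold $2n\geq 2\diam+1$ makes the test ``$\supind{B}{i}$ unchanged for $2n$ rounds'' of Theorem~\ref{thm:halting-condition} fire correctly in finite time; I would remark that the propagation argument in that proof uses only \emph{instantaneous} connectivity and therefore survives the time-variation. Once every robot has halted and dropped its connectivity constraint, each robot moves toward the closest point of the fixed common shape $s^\star$ with step length $\umax$, reaching $s^\star$ in finitely many rounds, which establishes the formation task.

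For asymptotic optimality I would use a scaling argument in $\umax$. The number of rounds $T_{\textup{cons}}$ needed for consensus and halting is fixed by the graph and the combinatorics of $(H,\phi)$ and is \emph{independent} of $\umax$. During these rounds each robot moves at most $T_{\textup{cons}}\,\umax$, so as $\umax\to0^+$ every robot remains within $O(\umax)$ of its initial position when consensus completes; by continuity of the smallest enclosing ball, stripe, and annulus in the point locations, the computed shape $s^\star$ converges to the true optimal shape for $P_n(0)$. After halting the robots travel to $s^\star$ in $\big(\max_i\dist(\supind{p}{i},s^\star)\big)/\umax+O(1)$ rounds, which matches the centralized optimum $\big(\max_i\dist(\supind{p}{i}(0),s^\star)\big)/\umax$ up to the additive, $\umax$-independent overhead $T_{\textup{cons}}$. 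Since the optimal cost grows like $1/\umax$, the ratio of achieved to optimal cost tends to $1$, giving optimality in the limit; the circle case is identical with the completion time weighted by $\radius(s^\star)$.

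I expect the main obstacle to be the interplay between asynchronous halting and connectivity: once a robot sets $\supind{\texttt{halt}}{i}=1$ and stops enforcing its connectivity constraints, an incident edge may break before a slower neighbor has converged, potentially destroying the joint strong connectivity required by Theorem~\ref{thm:correctness-of-constraints-consensus}. The resolution I would pursue is that, by the halting argument, a node's value has already propagated to all other nodes at least $\diam+1$ rounds before its local $2n$-round test can succeed; hence by the first halting event every node already shares the common value, and (uniqueness of the basis) the common basis $s^\star$, so no further communication is needed and subsequent edge losses are harmless. Making this ordering rigorous for the state-dependent disk graph, rather than the fixed digraph of Theorem~\ref{thm:halting-condition}, is the delicate point.
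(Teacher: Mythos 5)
The paper does not actually supply a proof of this proposition (it is explicitly ``omitted in the interest of brevity''), so there is nothing to compare against line by line; your proposal is a reasonable reconstruction and, importantly, it isolates the two points on which any correct proof must turn: (a) the constraint set is the \emph{frozen} initial configuration $P_n(0)$, so the abstract program is time-invariant and Theorem~\ref{thm:correctness-of-constraints-consensus} applies verbatim once joint strong connectivity of the state-dependent graph is secured, and (b) the interaction between the local $2n$-round halting test and the subsequent release of the connectivity constraints. Your observations that pre-halt edges are preserved by the $\fti$ construction (so paths in $\subscr{\GG}{disk}(P_n(0))$ persist and $2n\geq 2\diam+1$ always), and that the $1/\umax$ scaling makes the $\umax$-independent consensus overhead asymptotically negligible relative to the travel time, are both sound.

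The one step you should tighten is your resolution of the halting/connectivity interplay. Knowing that ``every node already shares the common value'' at the first halting instant is not by itself enough: equal values at a single instant do not imply that the bases have converged, nor that the common value equals $\phi(H)$, and if the graph then disconnects the stragglers could in principle be stuck below the optimum. The fix is available from the slack in the threshold: since $2n\geq 2\diam(\GG)+2$, the forward--backward propagation argument of Theorem~\ref{thm:halting-condition} forces at least one full round before the first halt during which \emph{no} node's value changes; at such a round no constraint held at any node violates any neighbor's basis, which is exactly the stationarity hypothesis under which the induction in the proof of Theorem~\ref{thm:correctness-of-constraints-consensus} yields $\bar\phi=\phi(H)$. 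Hence all nodes hold a solution basis \emph{before} any edge can be released, and later disconnections are harmless. You should also make explicit that agreement on the value $\phi(H)$ yields agreement on \texttt{target\_set} only when the optimal shape is unique (automatic for the smallest enclosing ball, and guaranteed for the stripe by the stripe-generic-position assumption); otherwise distinct nodes could hold distinct optimal bases and head for different shapes.
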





\section{Conclusions}

We have introduced a novel class of distributed optimization problems and
proposed a simple and intuitive algorithmic approach. Additionally, we have
rigorously established the correctness of the proposed algorithms and
presented a thorough Monte Carlo analysis to substantiate our conjecture
that, for a broad variety of settings, the time complexity of our
algorithms is linear. Finally, we have discussed in detail two modern
applications: target localization in sensor networks and formation control
in robotic networks.

Promising avenues for further research include proving the
linear-time-complexity conjecture, as well as applying our algorithms to
(i) optimization problems for randomly switching graphs and gossip
communication, (ii) distributed machine learning
problems~\cite{JB-YD-JT-OW:08,YL-VR-LV:08,KF-BBL-PT:06}, and (iii) convex
quadratic programming~\cite{RG:99}. Additionally, it is of interest to
verify the performance of our proposed target localization and formation
control algorithms in experimental setups.

\appendix

\subsection*{Abstract framework for the smallest enclosing stripe problem}
In this appendix we consider the smallest enclosing stripe problem in
Example~\ref{rem:examples} for stripe-generic points and show that is
satisfies the abstract optimization axioms. We begin with some notation.
Let $H$ be the set of constraints, i.e., the set of points in the plane for
which we aim to compute the smallest enclosing stripe. With a slight abuse
of notation, $h\in H$ denotes both a constraint and a point depending on
the context.  Let $\phi(G)$, $G\subset H$, be the width of the smallest
stripe enclosing the points in $G$. A constraint $h$ is violated by the set
$G$, i.e., $\phi(G\union \{h\})>\phi(G)$, if the point $h$ does not belong
to the smallest stripe enclosing $G$.

First, note that three points are necessary, but in general not sufficient,
to uniquely identify the smallest enclosing stripe of a set $H$ of three or
more points. Indeed, the boundary of the smallest stripe enclosing $H$
contains at least three points of $H$, e.g., point $h_1$ on one line and
points $\{h_2,h_3\}$ on the other line. So, the smallest enclosing stripe
identifies the triplet $\{h_1,h_2,h_3\}$. However, a simple geometric
argument shows that the converse is not true, i.e., three points are not
sufficient. Three points would be sufficient if the triplet was an ordered
set and uniquely identified a stripe.
To uniquely identify the smallest enclosing stripe for a set $H$ of five or
more points, one needs to add to the triplet $\{h_1,h_2,h_3\}$ another two
points of $H$ that belong to the correctly stripe among the three stripes
determined by $\{h_1,h_2,h_3\}$.  In summary, this discussion shows that
the combinatorial dimension is $5$.

Second, we prove that the two axioms of abstract optimization are
satisfied.  As usual, monotonicity is trivially satisfied, thus we need to
prove locality.
Suppose, by contradiction, that locality does not hold. Therefore there
exist $F$, $G$ and $h$, $F\subset G\subset H$ and $h\in H$, with
$-\infty<\phi(F) = \phi(G)$ such that $\phi(G \union \{h\})>\phi(G)$ and
$\phi(F\union\{h\}) = \phi(F)$.
Let $S_G$ and $S_F$ be the smallest stripes for $G$ and $F$, so that
$(\subscr{g}{1},\subscr{g}{2},\subscr{g}{3})$ and
$(\subscr{f}{1},\subscr{f}{2},\subscr{f}{3})$ are the ordered triplets of
points defining $S_G$ and $S_F$, respectively.
Since $h$ is violated by $G$ but not by $F$, this means that $h$ belongs to
$S_F$ but not to $S_G$. Therefore, the stripes $S_G$ and $S_F$ must be
different, i.e., $(\subscr{g}{1},\subscr{g}{2},\subscr{g}{3}) \neq
(\subscr{f}{1},\subscr{f}{2},\subscr{f}{3})$.
Because the points are in stripe-generic positions, we know that
$\dist(\subscr{g}{1},l(\subscr{g}{2},\subscr{g}{3})) \neq
\dist(\subscr{f}{1},l(\subscr{f}{2},\subscr{f}{3}))$ and, therefore,
$\phi(G)>\phi(F)$. This proves the contradiction.


\begin{thebibliography}{10}
\providecommand{\url}[1]{#1}
\csname url@samestyle\endcsname
\providecommand{\newblock}{\relax}
\providecommand{\bibinfo}[2]{#2}
\providecommand{\BIBentrySTDinterwordspacing}{\spaceskip=0pt\relax}
\providecommand{\BIBentryALTinterwordstretchfactor}{4}
\providecommand{\BIBentryALTinterwordspacing}{\spaceskip=\fontdimen2\font plus
\BIBentryALTinterwordstretchfactor\fontdimen3\font minus
  \fontdimen4\font\relax}
\providecommand{\BIBforeignlanguage}[2]{{%
\expandafter\ifx\csname l@#1\endcsname\relax
\typeout{** WARNING: IEEEtran.bst: No hyphenation pattern has been}%
\typeout{** loaded for the language `#1'. Using the pattern for}%
\typeout{** the default language instead.}%
\else
\language=\csname l@#1\endcsname
\fi
#2}}
\providecommand{\BIBdecl}{\relax}
\BIBdecl

\bibitem{GN-FB:06d}
G.~Notarstefano and F.~Bullo, ``Distributed consensus on enclosing shapes and
  minimum time rendezvous,'' in \emph{{IEEE} Conf. on Decision and Control},
  San Diego, CA, Dec. 2006, pp. 4295--4300.

\bibitem{GN-FB:06z}
------, ``Network abstract linear programming with application to minimum-time
  formation control,'' in \emph{{IEEE} Conf. on Decision and Control}, New
  Orleans, LA, Dec. 2007, pp. 927--932.

\bibitem{GN-FB:08x}
------, ``Network abstract linear programming with application to cooperative
  target localization,'' in \emph{Modelling, Estimation and Control of
  Networked Complex Systems}, ser. Understanding Complex Systems, A.~Chiuso,
  L.~Fortuna, M.~Frasca, L.~Schenato, and S.~Zampieri, Eds.\hskip 1em plus
  0.5em minus 0.4em\relax Springer, 2009, pp. 177--190.

\bibitem{JNT-DPB-MA:86}
J.~N. Tsitsiklis, D.~P. Bertsekas, and M.~Athans, ``Distributed asynchronous
  deterministic and stochastic gradient optimization algorithms,'' \emph{IEEE
  Transactions on Automatic Control}, vol.~31, no.~9, pp. 803--812, 1986.

\bibitem{NM-AJ:09}
N.~Motee and A.~Jadbabaie, ``Distributed multi-parametric quadratic
  programming,'' \emph{IEEE Transactions on Automatic Control}, vol.~54,
  no.~10, 2009.

\bibitem{AN-AO-PAP:09}
A.~Nedic, A.~Ozdaglar, and P.~A. Parrilo, ``Constrained consensus and
  optimization in multi-agent networks,'' \emph{IEEE Transactions on Automatic
  Control}, 2009, to appear.

\bibitem{BJ-AS-MK-KHJ:08}
B.~Johansson, A.~Speranzon, M.~Johansson, and K.~H. Johansson, ``On
  decentralized negotiation of optimal consensus,'' \emph{Automatica}, vol.~44,
  no.~4, pp. 1175--1179, 2008.

\bibitem{DPP-MC:07}
D.~P. Palomar and M.~Chiang, ``Alternative distributed algorithms for network
  utility maximization: Framework and applications,'' \emph{IEEE Transactions
  on Automatic Control}, vol.~52, no.~12, pp. 2254--2269, 2007.

\bibitem{JM-MS-EW:96}
J.~Matousek, M.~Sharir, and E.~Welzl, ``A subexponential bound for linear
  programming,'' \emph{Algorithmica}, vol.~16, no. 4/5, pp. 498--516, 1996.

\bibitem{BG:95}
B.~G{\"a}rtner, ``A subexponential algorithm for abstract optimization
  problems,'' \emph{SIAM Journal on Computing}, vol.~24, no.~5, pp. 1018--1035,
  1995.

\bibitem{MG:95}
M.~Goldwasser, ``A survey of linear programming in randomized subexponential
  time,'' \emph{SIGACT News}, vol.~26, no.~2, pp. 96--104, 1995.

\bibitem{NM:84}
N.~Megiddo, ``Linear programming in linear time when the dimension is fixed,''
  \emph{Journal of the Association for Computing Machinery}, vol.~31, no.~1,
  pp. 114--127, 1984.

\bibitem{BG-EW:96}
B.~G{\"a}rtner and E.~Welzl, ``Linear programming - randomization and abstract
  frameworks,'' in \emph{Symposium on Theoretical Aspects of Computer Science},
  ser. Lecture Notes in Computer Science, vol. 1046, 1996, pp. 669--687.

\bibitem{BG-EW:01}
------, ``A simple sampling lemma: {A}nalysis and applications in geometric
  optimization,'' \emph{Discrete \& Computational Geometry}, vol.~25, no.~4,
  pp. 569--590, 2001.

\bibitem{PKA-SS:01}
P.~K. Agarwal and S.~Sen, ``Randomized algorithms for geometric optimization
  problems,'' in \emph{Handbook of Randomization}, P.~Pardalos, S.~Rajasekaran,
  J.~Reif, and J.~Rolim, Eds.\hskip 1em plus 0.5em minus 0.4em\relax Kluwer
  Academic Publishers, 2001.

\bibitem{PKA-MS:98}
P.~K. Agarwal and M.~Sharir, ``Efficient algorithms for geometric
  optimization,'' \emph{ACM Computing Surveys}, vol.~30, no.~4, pp. 412--458,
  1998.

\bibitem{MA-NM:96}
M.~Ajtai and N.~Megiddo, ``A deterministic ${\operatorname{poly}}(\log \log
  n)$-time $n$-processor algorithm for linear programming in fixed dimension,''
  \emph{SIAM Journal on Computing}, vol.~25, no.~6, pp. 1171--1195, 1996.

\bibitem{YB-JWB-DR:04}
Y.~Bartal, J.~W. Byers, and D.~Raz, ``Fast, distributed approximation
  algorithms for positive linear programming with applications to flow
  control,'' \emph{SIAM Journal on Computing}, vol.~33, no.~6, pp. 1261--1279,
  2004.

\bibitem{HD-HK:08}
H.~Dutta and H.~Kargupta, ``Distributed linear programming and resource
  management for data mining in distributed environments,'' in \emph{IEEE Int.
  Conference on Data Mining}, Pisa, Italy, Dec. 2008, pp. 543--552.

\bibitem{YL-VR:06}
Y.~Lu and V.~Roychowdhury, ``Parallel randomized support vector machine,'' in
  \emph{Advances in Knowledge Discovery and Data Mining (10th Pacific-Asia
  Conference, Singapore 2006)}, ser. Lecture Notes in Artificial Intelligence,
  W.~K. Ng, M.~Kitsuregawa, and J.~Li, Eds.\hskip 1em plus 0.5em minus
  0.4em\relax Springer, 2006, pp. 205--214.

\bibitem{YL-VR-LV:08}
Y.~Lu, V.~Roychowdhury, and L.~Vandenberghe, ``Distributed parallel support
  vector machines in strongly connected networks,'' \emph{IEEE Transactions on
  Neural Networks}, vol.~19, no.~7, pp. 1167--1178, 2008.

\bibitem{JB-YD-JT-OW:08}
J.~Balc{\'a}zar, Y.~Dai, J.~Tanaka, and O.~Watanabe, ``Provably fast training
  algorithms for support vector machines,'' \emph{Theory of Computing Systems},
  vol.~42, no.~4, pp. 568--595, 2008.

\bibitem{KF-BBL-PT:06}
K.~Flouri, B.~Beferull-Lozano, and P.~Tsakalides, ``Training a {SVM}-based
  classifier in distributed sensor networks,'' in \emph{European Signal
  Processing Conference}, Florence, Italy, Sep. 2006.

\bibitem{FZ-LG:04}
F.~Zhao and L.~Guibas, \emph{Wireless Sensor Networks: {A}n Information
  Processing Approach}.\hskip 1em plus 0.5em minus 0.4em\relax Morgan-Kaufmann,
  2004.

\bibitem{AG-AV:01}
A.~Garulli and A.~Vicino, ``Set membership localization of mobile robots via
  angle measurements,'' \emph{IEEE Transactions on Robotics and Automation},
  vol.~17, no.~4, pp. 450--463, 2001.

\bibitem{VI-RB:06}
V.~Isler and R.~Bajcsy, ``The sensor selection problem for bounded uncertainty
  sensing models,'' \emph{IEEE Transactions on Automation Sciences and
  Engineering}, vol.~3, no.~4, pp. 372--381, 2006.

\bibitem{IS-MY:99}
I.~Suzuki and M.~Yamashita, ``Distributed anonymous mobile robots: Formation of
  geometric patterns,'' \emph{SIAM Journal on Computing}, vol.~28, no.~4, pp.
  1347--1363, 1999.

\bibitem{HA-YO-IS-MY:99}
H.~Ando, Y.~Oasa, I.~Suzuki, and M.~Yamashita, ``Distributed memoryless point
  convergence algorithm for mobile robots with limited visibility,'' \emph{IEEE
  Transactions on Robotics and Automation}, vol.~15, no.~5, pp. 818--828, 1999.

\bibitem{XD-AK:02}
X.~Defago and A.~Konagaya, ``Circle formation for oblivious anonymous mobile
  robots with no common sense of orientation,'' in \emph{ACM Int. Workshop on
  Principles of Mobile Computing}, Toulouse, France, Oct. 2002, pp. 97--104.

\bibitem{ME-XH:01b}
M.~Egerstedt and X.~Hu, ``Formation constrained multi-agent control,''
  \emph{IEEE Transactions on Robotics and Automation}, vol.~17, no.~6, pp.
  947--951, 2001.

\bibitem{HGT-GJP-VK:04}
H.~G. Tanner, G.~J. Pappas, and V.~Kumar, ``Leader-to-formation stability,''
  \emph{IEEE Transactions on Robotics and Automation}, vol.~20, no.~3, pp.
  443--455, 2004.

\bibitem{JAM-MEB-BAF:04c}
J.~A. Marshall, M.~E. Broucke, and B.~A. Francis, ``Formations of vehicles in
  cyclic pursuit,'' \emph{IEEE Transactions on Automatic Control}, vol.~49,
  no.~11, pp. 1963--1974, 2004.

\bibitem{RS:87}
R.~Shamir, ``The efficiency of the simplex method: {A} survey,''
  \emph{Management Science}, vol.~33, no.~3, pp. 301--334, 1987.

\bibitem{NAL:97}
N.~A. Lynch, \emph{Distributed Algorithms}.\hskip 1em plus 0.5em minus
  0.4em\relax Morgan Kaufmann, 1997.

\bibitem{JMH:08}
J.~M. Hendrickx, ``Graphs and networks for the analysis of autonomous agent
  systems,'' Ph.D. dissertation, Universit{\'e} Catholique de Louvain, Belgium,
  Feb. 2008.

\bibitem{RA-ALB:02}
R.~Albert and A.-L. Barab{\'a}si, ``Statistical mechanics of complex
  networks,'' \emph{Reviews of Modern Physics}, vol.~74, no.~1, pp. 47--97,
  2002.

\bibitem{JRD-DGK-JWT:77}
J.~R. Dunham, D.~G. Kelly, and J.~W. Tolle, ``Some experimental results
  concerning the expected number of pivots for solving randomly generated
  linear programs,'' Operations Research and System Analysis Department,
  University of North Carolina at Chapel Hill, Tech. Rep. 77-16, 1977.

\bibitem{KHB:77}
K.~H. Borgwardt, ``{U}ntersuchungenzur {A}symptotik der mittleren {S}chrittzahl
  von {S}implexverfahren in der {L}inearen {O}ptimierung,'' \emph{Operations
  Research-Verfahren}, vol.~28, pp. 332--345, 1977.

\bibitem{MJT:91}
M.~J. Todd, ``Probabilistic models for linear programming,'' \emph{Mathematics
  of Operations Research}, vol.~16, no.~4, pp. 671--693, 1991.

\bibitem{RT-GC-FD:05}
R.~Tempo, G.~Calafiore, and F.~Dabbene, \emph{Randomized Algorithms for
  Analysis and Control of Uncertain Systems}.\hskip 1em plus 0.5em minus
  0.4em\relax Springer, 2005.

\bibitem{FB-JC-SM:09}
F.~Bullo, J.~Cort{\'e}s, and S.~Mart{\'\i}nez, \emph{Distributed Control of
  Robotic Networks}, ser. Applied Mathematics Series.\hskip 1em plus 0.5em
  minus 0.4em\relax Princeton University Press, 2009, available at
  http://www.coordinationbook.info.

\bibitem{RG:99}
R.~Goldbach, ``Some randomized algorithms for convex quadratic programming,''
  \emph{Applied Mathematics \& Optimization}, vol.~39, no.~1, pp. 121--142,
  1999.

\end{thebibliography}
\end{document}